\newcommand{\PreserveBackslash}[1]{\let\temp=\\#1\let\\=\temp}
\newcolumntype{C}[1]{>{\PreserveBackslash\centering}p{#1}}
\newcolumntype{R}[1]{>{\PreserveBackslash\raggedleft}p{#1}}
\newcolumntype{L}[1]{>{\PreserveBackslash\raggedright}p{#1}}
\newcommand{\tabincell}[2]{\begin{tabular}{@{}#1@{}}#2\end{tabular}}  
\newtheorem{definition}{Definition}
\newtheorem{theorem}{Theorem}
  \providecommand\BibTeX{{%
    \normalfont B\kern-0.5em{\scshape i\kern-0.25em b}\kern-0.8em\TeX}}}
\patchcmd{\maketitle}{\@copyrightpermission}{
   \begin{minipage}{0.3\columnwidth}
     \href{https://creativecommons.org/licenses/by/4.0/}{\includegraphics[width=0.90\textwidth]{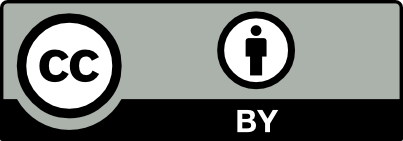}}
   \end{minipage}\hfill
   \begin{minipage}{0.7\columnwidth}
     \href{https://creativecommons.org/licenses/by/4.0/}{This work is licensed under a Creative Commons Attribution International 4.0 License.}
   \end{minipage}
  
   \vspace{5pt}
}{}{}
\begin{document}

\title{Towards Representation Alignment and Uniformity in Collaborative Filtering}


\author{Chenyang Wang}
\affiliation{%
  \institution{DCST, BNRist, Tsinghua University}
  \city{Beijing 100084}
  \country{China}}
\email{wangcy18@mails.tsinghua.edu.cn}

\author{Yuanqing Yu}
\affiliation{%
  \institution{DCST, BNRist, Tsinghua University}
  \city{Beijing 100084}
  \country{China}}
\email{yuyq18@mails.tsinghua.edu.cn}

\author{Weizhi Ma}
\affiliation{%
  \institution{AIR, Tsinghua University}
  \city{Beijing 100084}
  \country{China}}
\email{mawz@tsinghua.edu.cn}

\author{Min Zhang*}
\affiliation{%
  \institution{DCST, BNRist, Tsinghua University}
  \city{Beijing 100084}
  \country{China}}
\email{z-m@tsinghua.edu.cn}

\author{Chong Chen}
\affiliation{%
  \institution{DCST, BNRist, Tsinghua University}
  \city{Beijing 100084}
  \country{China}}
\email{cc17@mails.tsinghua.edu.cn}

\author{Yiqun Liu}
\affiliation{%
  \institution{DCST, BNRist, Tsinghua University}
  \city{Beijing 100084}
  \country{China}}
\email{yiqunliu@tsinghua.edu.cn}

\author{Shaoping Ma}
\affiliation{%
  \institution{DCST, BNRist, Tsinghua University}
  \city{Beijing 100084}
  \country{China}}
\email{msp@tsinghua.edu.cn}

\def\authors{Chenyang Wang, Yuanqing Yu, Weizhi Ma, Min Zhang, Chong Chen, Yiqun Liu, Shaoping Ma}


\renewcommand{\shortauthors}{Chenyang Wang et al.}

\begin{abstract}
Collaborative filtering (CF) plays a critical role in the development of recommender systems.
Most CF methods utilize an encoder to embed users and items into the same representation space, and the Bayesian personalized ranking (BPR) loss is usually adopted as the objective function to learn informative encoders.
Existing studies mainly focus on designing more powerful encoders (e.g., graph neural network) to learn better representations.
However, few efforts have been devoted to investigating the desired properties of representations in CF, which is important to understand the rationale of existing CF methods and design new learning objectives.
In this paper, we measure the representation quality in CF from the perspective of \textit{alignment} and \textit{uniformity} on the hypersphere.
We first theoretically reveal the connection between the BPR loss and these two properties.
Then, we empirically analyze the learning dynamics of typical CF methods in terms of quantified alignment and uniformity, which shows that better alignment or uniformity both contribute to higher recommendation performance.
Based on the analyses results, a learning objective that directly optimizes these two properties is proposed, named DirectAU.
We conduct extensive experiments on three public datasets, and the proposed learning framework with a simple matrix factorization model leads to significant performance improvements compared to state-of-the-art CF methods.
Our implementations are publicly available\footnote{https://github.com/THUwangcy/DirectAU}.
\end{abstract}

\begin{CCSXML}
<ccs2012>
  <concept>
      <concept_id>10002951.10003317.10003347.10003350</concept_id>
      <concept_desc>Information systems~Recommender systems</concept_desc>
      <concept_significance>500</concept_significance>
  </concept>
</ccs2012>
\end{CCSXML}

\ccsdesc[500]{Information systems~Recommender systems}

\keywords{Recommender Systems, Collaborative Filtering, Representation Learning, Alignment and Uniformity}

\maketitle

\newcommand\blfootnote[1]{%
    \begingroup
    \renewcommand\thefootnote{}\footnote{#1}%
    \addtocounter{footnote}{-1}%
    \endgroup
}

\blfootnote{*Corresponding author.}

\vspace{-5mm}
\section{Introduction}
Recommender system has become an essential part of users' engagements with web services, such as product recommendation~\cite{mcauley2015image}, video recommendation~\cite{covington2016deep}, and so on.
To help users discover potential items of interests, collaborative filtering (CF) is widely adopted in personalized recommendation~\cite{schafer2007collaborative}.
The core idea of CF is that similar users tend to have similar preferences.
Compared to content-based recommendation methods, CF only relies on past user behaviors to predict users' preferences on candidate items.
The simplicity and effectiveness of CF make it a canonical technique in recommender systems~\cite{su2009survey}.

Most CF methods utilize an encoder to embed users and items to a shared space and then optimize an objective function to learn informative user and item representations~\cite{mao2021simplex}.
The simplest encoder can be an embedding table that directly maps user and item IDs to embeddings~\cite{koren2009matrix}, and Bayesian personalized ranking (BPR)~\cite{rendle2009bpr} is usually adopted as the objective function to discriminate between positive interactions and unobserved ones.
Existing studies about CF mainly focus on designing more powerful encoders to model complex collaborative signals between users and items.
Specifically, neural-based interaction encoders emerge in recent years, such as multi-layer perceptron (MLP)~\cite{he2017neural}, attention mechanism~\cite{chen2017attentive}, graph neural network (GNN)~\cite{wang2019neural, he2020lightgcn}, and so on.
Meanwhile, some recent works point out that the nowadays complex encoders in CF actually lead to marginal performance improvements~\cite{mao2021simplex}.
As a result, researchers also begin to investigate other objective functions beyond the common pairwise BPR loss (e.g., InfoNCE loss~\cite{zhou2021contrastive}, cosine contrastive loss~\cite{mao2021simplex}), which have been shown to bring more robust improvements than complex encoders.

However, few research efforts have been devoted to investigating the desired properties of user and item representations derived by the encoder.
This is important to justify the rationale behind existing CF methods and design new learning objectives that favor these properties.
Intuitively, representations of positive-related user-item pairs should be close to each other, and each representation should preserve as much information about the user/item itself as possible.
Assuming all the representations are $l_2$ normalized, these two properties can be referred to as 1) \textit{alignment} and 2) \textit{uniformity} on the unit hypersphere~\cite{wang2020understanding}.
To learn informative user and item representations, both alignment and uniformity are of great importance.
If only alignment is considered, perfectly aligned encoders are easy to be achieved by mapping all the users and items to the same embedding.
The goal of existing loss functions in CF can be seen to avoid such trivial constants (i.e., preserving uniformity) while optimizing for better alignment.
In practice, negative samples are usually utilized to achieve this goal.
For example, the BPR loss~\cite{rendle2009bpr} pairs each positive interaction with a randomly sampled negative item, and the predicted score of the interacted item is encouraged to be higher than the negative one.

In this work, we analyze the \textit{alignment} and \textit{uniformity} properties in CF inspired by recent progress in contrastive representation learning~\cite{wang2020understanding, gao2021simcse}.
We first theoretically show that the BPR loss actually favors these two properties, and perfectly aligned and uniform encoders form the exact minimizers of the BPR loss.
Then, we empirically analyze the learning dynamics of typical CF methods in terms of alignment and uniformity via corresponding quantifying metrics proposed in~\cite{wang2020understanding}.
We find different CF methods demonstrate distinct learning trajectories, and either better alignment or better uniformity benefits the representation quality.
For instance, the simplest BPR quickly converges to promising alignment and mainly improves uniformity afterwards.
Other advanced methods achieve better alignment or uniformity via various techniques, such as hard negative samples and graph-based encoders, which lead to better performance accordingly.
Based on the analyses results, we propose a learning objective that directly optimizes these two properties, named DirectAU.
Extensive experiments are conducted on three public real-world datasets.
Experimental results show that a simple matrix factorization based encoder (i.e., embedding table) that optimizes the proposed DirectAU loss yields remarkable improvements (up to 14\%) compared to state-of-the-art CF methods.

The main contributions of this work can be summarized as follows:
\begin{itemize}
    \item We theoretically show that perfectly aligned and uniform encoders form the exact minimizers of the BPR loss. We also empirically analyze the learning dynamics of typical CF methods in terms of quantified alignment and uniformity.
    \item Based on the analyses results, a simple but effective learning objective that directly optimizes these two properties is proposed, named DirectAU.
    \item Extensive experiments on three public datasets show that the proposed DirectAU well balances between alignment and uniformity. When optimizing the DirectAU objective, even the simplest matrix factorization based encoder leads to significant performance improvements compared to state-of-the-art CF methods.
\end{itemize}

\section{Preliminaries}
In this section, we first formulate the collaborative filtering problem.
Then we introduce how to measure alignment and uniformity based on recent progress in self-supervised learning~\cite{wang2020understanding}.

\subsection{Collaborative Filtering}
Let $\mathcal{U}$ and $\mathcal{I}$ denote the user and item set, respectively.
Given a set of observed user-item interactions $\mathcal{R}=\{(u,i)~|~u~{\rm interacted~with}~i\}$, CF methods aim to infer the score $s(u,i)\in\mathbb{R}$ for each unobserved user-item pair indicating how likely the user $u$ tends to interact with the item $i$.
Then, items with the highest scores for each user will be recommended based on the predictions.

In general, most CF methods use an encoder network $f(\cdot)$ that maps each user and item into a low-dimensional representation $f(u),f(i)\in\mathbb{R}^d$ ($d$ is the dimension of the latent space).
For example, the encoder in matrix factorization models is usually an embedding table, which directly maps each user and item to a latent vector based on their IDs.
The encoder in graph-based models further utilizes the neighborhood information.
Then, the predicted score is defined as the similarity between the user and item representation (e.g., dot product, $s(u,i)=f(u)^Tf(i)$).
As for the learning objective, most studies adopt the pairwise BPR~\cite{rendle2009bpr} loss to train the model:
\begin{equation}
    \mathcal{L}_{BPR} = \dfrac{1}{|\mathcal{R}|}\sum_{(u,i)\in\mathcal{R}}-\log\left[{\rm sigmoid}\left(s(u,i) - s(u,i^-)\right)\right],
    \label{equ:bpr}
\end{equation}
where $i^-$ is a randomly sampled negative item that the user has not interacted with.
This loss function aims to optimize the probability that the target item gets a higher score than random negative items.

\subsection{Alignment and Uniformity}
Recent studies~\cite{wang2020understanding, gao2021simcse} in unsupervised contrastive representation learning identify that the quality of representations is highly related to two key properties, i.e., alignment and uniformity.
Given the distribution of data $p_{\rm data}(\cdot)$ and the distribution of positive pairs $p_{\rm pos}(\cdot,\cdot)$, alignment is straightforwardly defined as the expected distance between normalized embeddings of positive pairs:
\begin{equation}
    l_{\rm align} \triangleq \mathop{\mathbb{E}}_{(x, x^+)\sim p_{\rm pos}}||\Tilde{f(x)} - \Tilde{f(x^+)}||^2,
\end{equation}
where $\tilde{f(\cdot)}$ indicates $l_2$ normalized representations.
On the other hand, the uniformity loss is defined as the logarithm of the average pairwise Gaussian potential:
\begin{equation}
    l_{\rm uniform} \triangleq \log\mathop{\mathbb{E}}_{x,y\sim p_{\rm data}}e^{-2||\tilde{f(x)} - \tilde{f(y)}||^2}.
\end{equation}
These two metrics are well aligned with the objective of representation learning: positive instances should be close to each other while random instances should scatter on the hypersphere.
In this work, we will connect the BPR loss with these two metrics and use them to analyze the learning dynamics of typical CF methods.


\section{Alignment and Uniformity in Collaborative Filtering}
In this section, we first theoretically show that the BPR loss favors representation alignment and uniformity on the hypersphere.
Then, we empirically observe how these two properties evolve during training for different CF methods.

\subsection{Theoretical Analyses}
Assuming the distribution of positive user-item pairs is $p_{\rm pos}$, and the distribution of users and items is denoted as $p_{\rm user}$ and $p_{\rm item}$ respectively, we first define the notion of optimality for alignment and uniformity in CF as follows:

\begin{definition}[Perfect Alignment]
An encoder $f$ is perfectly aligned if $\tilde{f(u)}=\tilde{f(i)}$ a.s. over $(u,i)\sim p_{\rm pos}$.
\end{definition}

\begin{definition}[Perfect Uniformity]
An encoder $f$ is perfectly uniform if the distribution of $\tilde{f(u)}$ for $u\sim p_{\rm user}$ and the distribution of $\tilde{f(i)}$ for $i\sim p_{\rm item}$ are the uniform distribution $\sigma_{d-1}$ on $\mathcal{S}^{d-1}$.
\end{definition}

Here $\mathcal{S}^{d-1} = \{x\in\mathbb{R}^d: ||x||=1\}$ is the surface of the $d$-dimensional unit ball.
Note that perfectly aligned encoders can be easily achieved by mapping all the inputs to the same representation, at the cost of the worst uniformity.
Perfectly uniform encoders can also be achieved considering the number of users/items is usually large and $d$ is small in real-world applications. 
The following theorem shows that the BPR loss favors these two properties if perfect alignment and uniformity are realizable.

\begin{theorem}
If perfectly aligned and uniform encoders exist, they form the exact minimizers of the BPR loss $\mathcal{L}_{BPR}$.
\end{theorem}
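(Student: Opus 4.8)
The plan is to recast $\mathcal{L}_{BPR}$ in its population form and split it into an \emph{alignment} term and a \emph{uniformity} term that can be bounded separately. First I would use, as the definitions require, $l_2$-normalized embeddings, so that $s(u,i)=\tilde{f(u)}^\top\tilde{f(i)}\in[-1,1]$ lives on $\mathcal{S}^{d-1}$, and write
\begin{equation*}
\mathcal{L}_{BPR}=\mathop{\mathbb{E}}_{(u,i)\sim p_{\rm pos},\, i^-\sim p_{\rm item}}\big[-\log{\rm sigmoid}\big(s(u,i)-s(u,i^-)\big)\big].
\end{equation*}
Using the elementary identity $-\log{\rm sigmoid}(a-b)=\log(e^{a}+e^{b})-a$, this decomposes as
\begin{equation*}
\mathcal{L}_{BPR}=\underbrace{-\mathop{\mathbb{E}}_{(u,i)\sim p_{\rm pos}} s(u,i)}_{\text{alignment}}+\underbrace{\mathop{\mathbb{E}}_{(u,i),\,i^-}\log\big(e^{s(u,i)}+e^{s(u,i^-)}\big)}_{\text{uniformity}},
\end{equation*}
which mirrors the InfoNCE decomposition of Wang and Isola and is the structural backbone of the argument.

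Next I would handle the two terms. For the alignment term, Cauchy--Schwarz on the unit sphere gives $s(u,i)=\tilde{f(u)}^\top\tilde{f(i)}\le 1$ pointwise, so $-\mathbb{E}_{p_{\rm pos}}\,s(u,i)\ge -1$, with equality if and only if $\tilde{f(u)}=\tilde{f(i)}$ almost surely over $p_{\rm pos}$ --- that is, exactly when $f$ is perfectly aligned. The identity $\|\tilde{f(u)}-\tilde{f(i)}\|^2=2-2\,s(u,i)$ shows this term equals $l_{\rm align}/2-1$, so minimizing it is literally minimizing $l_{\rm align}$.

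For the uniformity term I would substitute the alignment optimum $s(u,i)=1$, reducing it to the pairwise potential $\mathbb{E}_{u\sim p_{\rm user},\,i^-\sim p_{\rm item}}\log(e+e^{\,\tilde{f(u)}^\top\tilde{f(i^-)}})$ over \emph{independent} draws, and show it is uniquely minimized when the pushed-forward distributions of $\tilde{f(u)}$ and $\tilde{f(i^-)}$ are both $\sigma_{d-1}$. The main obstacle --- and the reason BPR is harder than InfoNCE --- is that the positive and the single negative sit inside one logarithm, so the two terms do not truly decouple; the reduction above is only licensed once one observes that perfect alignment forces the user- and item-embedding marginals to coincide, which rules out the degenerate ``antipodal'' configurations that would otherwise beat the uniform one on the repulsive term in isolation. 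Granting that linkage, the term becomes a symmetric energy $\iint \log(e+e^{\langle x,y\rangle})\,d\mu(x)\,d\mu(y)$, and I would invoke the classical fact (as in Wang and Isola) that such strictly positive-definite kernels on $\mathcal{S}^{d-1}$ are uniquely energy-minimized by the uniform measure, whose zero mean also annihilates the cross term. The closing step is then routine: since a perfectly aligned and uniform encoder is assumed to exist, it simultaneously attains the $-1$ bound on the alignment term and the energy minimum on the uniformity term, hence realizes the infimum of $\mathcal{L}_{BPR}$; conversely, tightness of the two bounds forces any minimizer to be perfectly aligned and uniform, so these encoders are exactly the minimizers.
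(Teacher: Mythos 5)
Your decomposition is essentially the same one the paper uses: rewrite $-\log\mathrm{sigmoid}(a-b)=-a+\log(e^a+e^b)$ with $a=s(u,i)$ and $b=s(u,i^-)$, bound the attractive part by $-1$ (equality iff perfect alignment), and reduce the repulsive part to the energy $\iint\log(e+e^{x^\top y})\,\mathrm{d}\sigma_{d-1}(x)\,\mathrm{d}\sigma_{d-1}(y)$ via the lemma of Wang and Isola. The route is therefore not different; the issue is whether your justification for passing from $\mathbb{E}\log(e^{s(u,i)}+e^{s(u,i^-)})$ to $\mathbb{E}\log(e+e^{s(u,i^-)})$ works, and here there is a genuine gap. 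Since $\log(e^a+e^b)$ is \emph{increasing} in $a$, substituting $a=1$ gives an upper bound on that term, not a lower bound, and its infimum in isolation is approached at $a=-1$, not $a=1$; so the two terms of your decomposition cannot simply be minimized separately. Your proposed fix --- that perfect alignment forces the user and item marginals to coincide, ruling out antipodal configurations --- only shows that \emph{among perfectly aligned encoders} the uniform one is best. It does not exclude a non-aligned encoder that accepts an attractive term above $-1$ in exchange for a smaller repulsive term, so it establishes neither the global lower bound nor the ``exact minimizers'' claim.

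The missing step is a one-line pointwise argument that the paper's inequality chain supplies implicitly: the map $a\mapsto -a+\log(e^a+e^b)$ has derivative $-1+e^a/(e^a+e^b)<0$ and is therefore strictly decreasing, so for $a=s(u,i)\le 1$ one has
\begin{equation*}
-s(u,i)+\log\bigl(e^{s(u,i)}+e^{s(u,i^-)}\bigr)\;\ge\;-1+\log\bigl(e+e^{s(u,i^-)}\bigr),
\end{equation*}
with equality iff $s(u,i)=1$ almost surely. This says that whatever is saved on the repulsive term by misaligning a positive pair is always strictly less than what is lost on the attractive term; it is exactly what licenses ``substituting the alignment optimum'' and makes the two bounds simultaneously attainable and simultaneously tight. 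With this inequality in place, your remaining steps (the energy minimization over $\sigma_{d-1}$ and the tracking of equality conditions) go through as in the paper; without it, the decoupled argument does not close.
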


\begin{proof}
Assuming the similarity function $s(u,i)$ is cosine similarity (user/item representations are normalized), we have
\begin{align}
    \mathcal{L}_{BPR} &= \mathop{\mathbb{E}}_{(u, i)\sim p_{\rm pos}}-\log{\rm sigmoid}\left(s(u,i) - s(u,i^-)\right) \nonumber \\
    &= \mathop{\mathbb{E}}_{(u, i)\sim p_{\rm pos}}-\log\left(\dfrac{e^{\tilde{f(u)}^T\tilde{f(i)}}}{e^{\tilde{f(u)}^T\tilde{f(i)}} + e^{\tilde{f(u)}^T\tilde{f(i^-)}}}\right) \nonumber \\
    &=\mathop{\mathbb{E}}_{(u, i)\sim p_{\rm pos}}-\tilde{f(u)}^T\tilde{f(i)} + \log\left(e^{\tilde{f(u)}^T\tilde{f(i)}} + e^{\tilde{f(u)}^T\tilde{f(i^-)}}\right) \nonumber \\
    &\geq\mathop{\mathbb{E}}_{(u, i)\sim p_{\rm pos}}\left[-1 + \log\left(e^1 + e^{\tilde{f(u)}^T\tilde{f(i^-)}}\right)\right]\\
    &\geq -1 + \int_{\mathcal{S}^{d-1}}\int_{\mathcal{S}^{d-1}}\log\left(e + e^{x^Ty}\right){\rm d}\sigma_{d-1}(x){\rm d}\sigma_{d-1}(y).
\end{align}
According to the definition of perfect alignment, the equality in Equation~(4) is satisfied if and only if $f$ is perfectly aligned.
According to Lemma 2 in~\cite{wang2020understanding}, Equation~(5) is satisfied if and only if the feature distribution induced by $f$ is $\sigma_{d-1}$ ($f$ is perfectly uniform).
Therefore, $\mathcal{L}_{BPR} \geq$ a constant independent of $f$, where equality is satisfied if and only if $f$ is perfectly aligned and uniform.
\end{proof}

Considering the quantified metrics in Section 2.2 have been shown to be well aligned with perfect alignment and uniformity~\cite{wang2020understanding}, this theorem shows that the BPR loss indeed favors lower $l_{\rm align}$ and $l_{\rm uniform}$.
Next, we will empirically show the learning dynamics of different CF methods in terms of alignment and uniformity.

\begin{figure}[t!]
\centering
\includegraphics[trim={0 0 0 0}, clip, width=1\columnwidth]{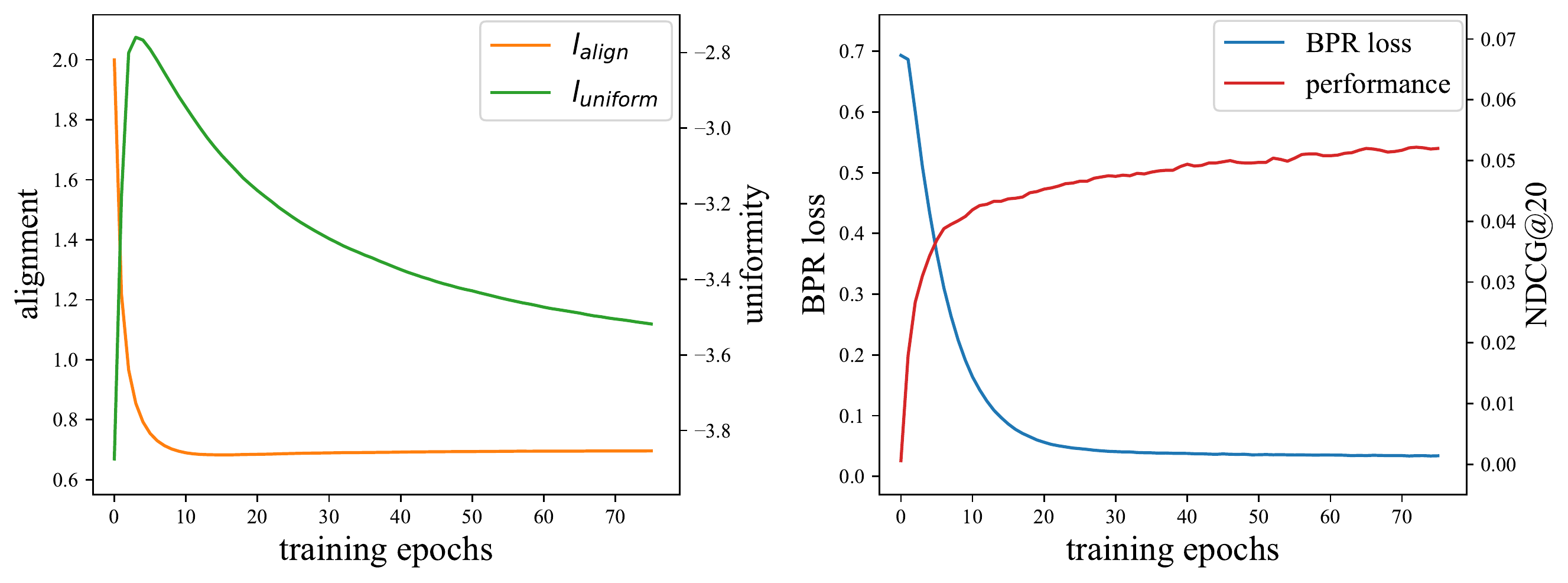}
\caption{The trends of $l_{\rm align}$ and $l_{\rm uniform}$ during training (left) and the learning curve (right) when optimizing the BPR loss on the Beauty dataset.}
\label{fig:bpr}
\vspace{-0mm}
\end{figure}

\subsection{Empirical Observations}
We use the BPR loss to train a matrix factorization (MF) model on the Beauty dataset\footnote{More information about the dataset and metrics will be detailed in Section 5.1.}.
The encoder here is a simple embedding table that maps IDs to embeddings.
Figure~\ref{fig:bpr} shows how these two properties\footnote{Calculations of alignment and uniformity losses in CF will be detailed in Appendix.}, the BPR loss, and the recommendation performance (NDCG@20), change during training.
First, we find the randomly initialized encoder is poorly aligned but well uniform (the initial uniformity loss is low).
With the optimization of the BPR loss, the alignment loss decreases quickly and results in the increase of the uniformity loss.
As the alignment loss becomes stable, the uniformity loss begins to decrease.
Overall, the recommendation performance improves as better alignment and uniformity are achieved.
This empirically validates the analyses in Section 3.1 that the BPR loss indeed optimizes for lower $l_{\rm align}$ and $l_{\rm uniform}$.

Besides the simplest MF encoder with the BPR loss (BPRMF), different CF methods may have distinct learning trajectories.
We further visualize the alignment and uniformity metrics every epoch\footnote{The start point of each method is the status after 5 epochs of training. We do not draw the first few points because they are far away from the main area.} for 4 typical CF methods on Beauty, as shown in Figure~\ref{fig:observation}.
BPRMF denotes the simplest MF encoder with the BPR loss as mentioned above.
BPR-DS~\cite{rendle2014improving} enhances BPRMF by adopting a dynamic negative sampling strategy that makes the sampling probability proportional to the predicted score.
LGCN~\cite{he2020lightgcn} utilizes graph neural network (GNN) as the encoder and uses the standard BPR training strategy.
ENMF~\cite{chen2020efficient} leverages all the negative interactions and devises an efficient approach to optimize the mean squared error (MSE) loss.
The stars in Figure~\ref{fig:observation} indicate the converged points of different models, and we annotate NDCG@20 in parentheses.
We mainly have the following observations:
\begin{itemize}
    \item The optimization of BPR focuses more on uniformity (discriminating between positive and negative interactions) but does not continuously pushes positive user-item pairs closer.
    \item BPR-DS samples more difficult negative items, and hence leads to lower uniformity loss and better performance. But hard negatives also make it difficult to align positive user-item pairs (higher alignment loss).
    \item LGCN aggregates the neighborhood information and hence achieves superior alignment even in the beginning. This explains why LGCN generally performs well with the BPR loss. The GNN encoder structure is good at alignment, while the BPR loss does well on uniformity. Although the training procedure hurts alignment and the final uniformity is worse than BPRMF, the ending alignment is still remarkable, which leads to better performance accordingly.
    \item Different from the above pairwise methods, ENMF directly optimizes MSE and leverages all the negative interactions, which pushes the scores of positive user-item pairs to 1 but not just greater than negative pairs like BPR. This whole-data based training benefits the optimization of alignment to a large extent while maintaining promising uniformity, and hence yields superior performance. But such pointwise optimization also hurts uniformity at the later training stage.
\end{itemize}

\begin{figure}[t!]
\centering
\includegraphics[trim={0 0 0 0}, clip, width=0.95\columnwidth]{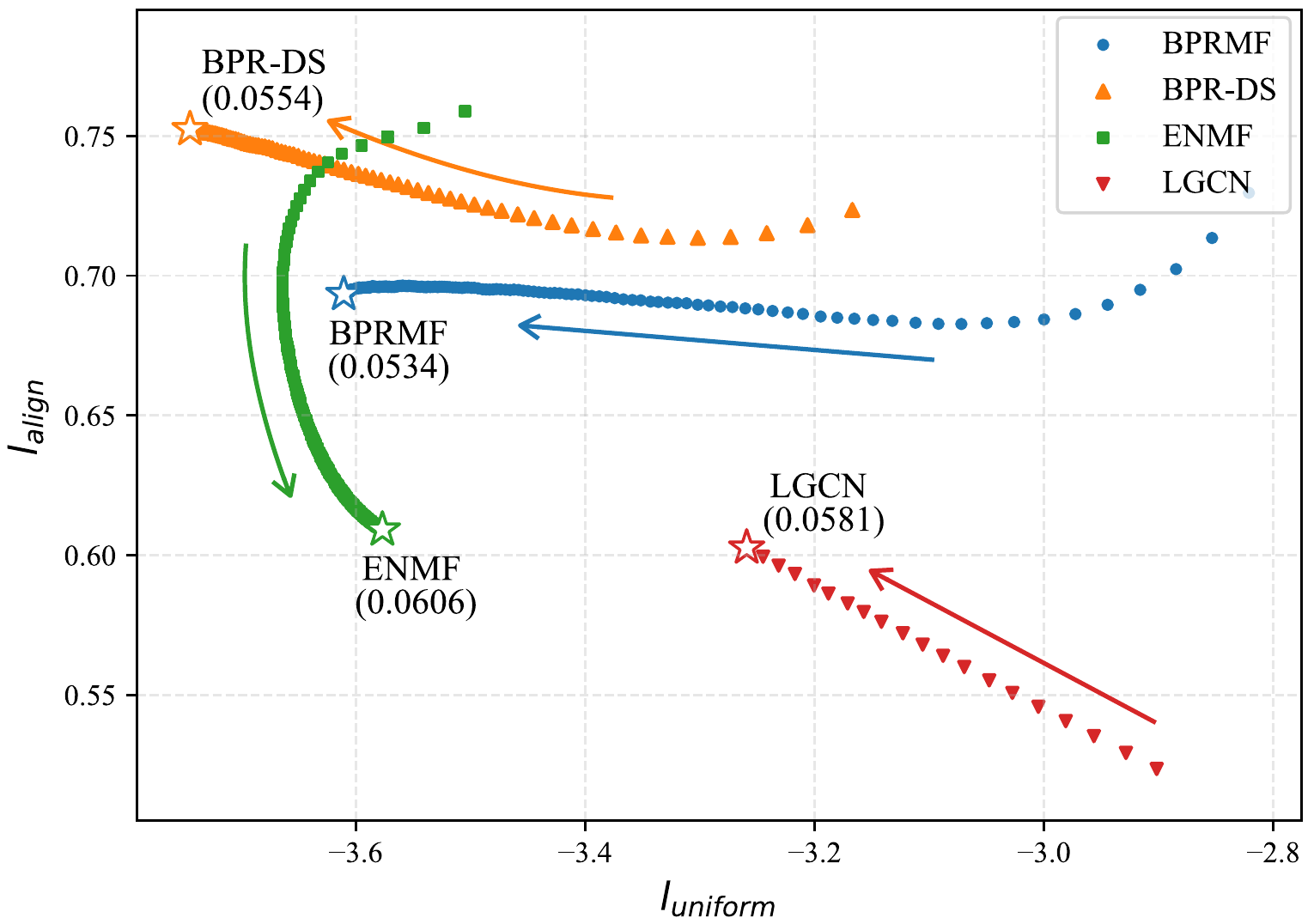}
\caption{$l_{\rm align}$-$l_{\rm uniform}$ plot for different CF methods during training. We visualize these two metrics every epoch, and the stars indicate the converged points. We also annotate NDCG@20 for each model in parentheses (higher numbers are better). For $l_{\rm align}$ and $l_{\rm uniform}$, lower numbers are better.}
\label{fig:observation}
\end{figure}

According to the above observations, we find different CF methods have distinct learning dynamics in terms of alignment and uniformity.
Compared to the standard BPRMF, BPR-DS is better at uniformity but leads to worse alignment; LGCN is better at alignment but yields worse uniformity, while both BPR-DS and LGCN achieve higher recommendation performance than BPRMF.
ENMF further gets the best performance with both promising alignment and uniformity.
This shows that user and item representations in CF indeed favor these two properties.
Achieving better alignment or uniformity both contribute to higher recommendation performance, and it can be beneficial to optimize them simultaneously.

\section{Directly Optimizing Alignment and Uniformity (DirectAU)}

\begin{figure}[t!]
\centering
\includegraphics[trim={0 0 0 0}, clip, width=1\columnwidth]{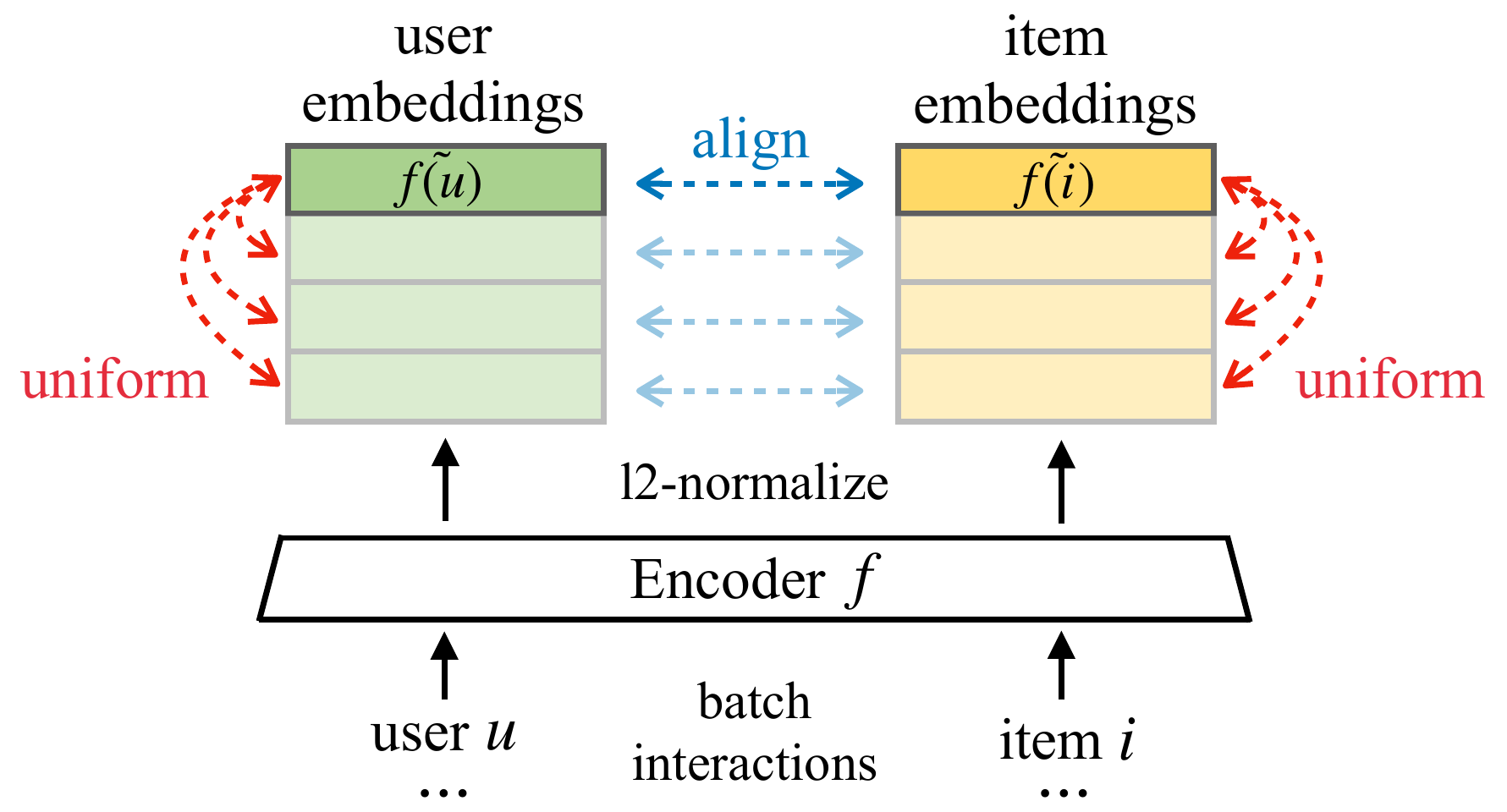}
\caption{Overview of the proposed DirectAU. We directly optimize 1) representation alignment for positive user-item pairs and 2) in-batch uniformity for users/items.}
\label{fig:model}
\vspace{-0mm}
\end{figure}

The above analyses demonstrate that both alignment and uniformity are essential to learn informative user and item representations.
This motivates us to design a new learning objective that directly optimizes these two properties to achieve better recommendation performance, named DirectAU.

Figure 1 illustrates the overall structure of the proposed framework.
The input positive user-item pairs are first encoded to embeddings and l2-normalized to the hypersphere.
We use a simple embedding table (mapping user/item IDs to embeddings) as the default encoder\footnote{Combinations with other encoders like graph neural networks will be tested in Section 5.4, and we find a simple embedding table yields remarkable performance.}.
Then, we quantify alignment and uniformity in CF as follows:
\begin{equation}
    \begin{split}
        l_{\rm align} =& \mathop{\mathbb{E}}_{(u, i)\sim p_{\rm pos}}||\tilde{f(u)} - \tilde{f(i)}||^2\\
        l_{\rm uniform} =&~\log\mathop{\mathbb{E}}_{u,u'\sim p_{\rm user}}e^{-2||\tilde{f(u)} - \tilde{f(u')}||^2} /~2~+ \\
        &~\log\mathop{\mathbb{E}}_{i,i'\sim p_{\rm item}}e^{-2||\tilde{f(i)} - \tilde{f(i')}||^2} /~2.
    \end{split}
    \label{eq:def}
\end{equation}
The alignment loss pushes up the similarity between representations of positive-related user-item pairs, while the uniformity loss measures how well the representations scatter on the hypersphere.
We separately calculate the uniformity within user representations and item representations because the data distribution of user and item might be diverse, which is more suitable to be measured respectively.
Finally, we jointly optimize these two objectives with a trade-off hyperparameter $\gamma$:
\begin{equation}
    \mathcal{L}_{\rm DirectAU} = l_{\rm align} + \gamma l_{\rm uniform}.
\end{equation}
The weight $\gamma$ controls the desired degree of uniformity, which is dependent on the characteristic of each dataset.
The learning algorithm of DirectAU can be found in Appendix.

Note that previous CF methods usually rely on negative sampling to discriminate between positive and negative interactions, while DirectAU does not need additional negative samples and only uses the input batch of positive user-item pairs.
The uniformity loss is calculated based on the in-batch pairwise distances between representations.
Using in-batch instances makes it more consistent with the actual data distribution of users and items (i.e., $p_{\rm user}, p_{\rm item}$), which has been shown to help reduce exposure bias in recommender systems~\cite{zhou2021contrastive}.
Compared to existing CF methods, DirectAU is easy to implement in the absence of negative samples, and there is only one hyper-parameter to tune (no need to consider the number of negative samples the sampling strategy).
This makes DirectAU easy to work with various application scenarios.
As for the score function, we use the dot product between user and item representations to calculate ranking scores and make recommendations, which is common in the literature~\cite{wang2020make, he2017neural, he2020lightgcn}.

\section{Experiments}
In this section, we conduct extensive experiments on three public datasets to validate the effectiveness of DirectAU.
We first describe the experimental settings (Section 5.1) and compare the overall top-\textit{K} recommendation performance of DirectAU with other state-of-the-art CF methods (Section 5.2).
Then, we show the learning curves when only optimizing alignment or uniformity to verify the importance of both properties (Section 5.3).
We also investigate the performance of DirectAU when integrated with other CF encoders (Section 5.4).
Finally, we provide the efficiency analyses (Section 5.5) and parameter sensitivity of DirectAU (Section 5.6).

\subsection{Experimental Settings}
\subsubsection{\textbf{Datasets}}
We use three public datasets in real-world scenarios.
All the datasets are publicly available and widely adopted in previous studies~\cite{wang2019neural, he2020lightgcn, wang2020toward, wang2022sequential}.
\begin{itemize}
    \item \textbf{Beauty}\footnote{https://jmcauley.ucsd.edu/data/amazon/links.html}: This is one of the series of product review datasets crawled from Amazon. The data is split into separate datasets by the top-level product category.
    \item \textbf{Gowalla}\footnote{http://snap.stanford.edu/data/loc-gowalla.html}: This is a check-in dataset~\cite{liang2016modeling} obtained from Gowalla, where users share their locations by checking-in.
    \item \textbf{Yelp2018}\footnote{https://www.yelp.com/dataset}: This is a business recommendation dataset, including restaurants, bars and so on. We use the transaction records after \textit{Jan. 1st, 2018} following previous work~\cite{wang2019neural, he2020lightgcn}.
\end{itemize}
For preprocessing the datasets, we remove repeated interactions and ensure each user and item to have at least 5 associated interactions.
This strategy is also widely adopted in previous work~\cite{lee2021bootstrapping, wang2019modeling}.
The statistics of datasets after preprocessing are summarized in Table \ref{tab:dataset}.

\subsubsection{\textbf{Baselines}}
We compare the performance of DirectAU with various state-of-the-art CF methods:
\begin{itemize}
    \item \textbf{BPRMF}~\cite{rendle2009bpr}: This is a typical negative-sampling method that optimizes MF with a pairwise ranking loss, where the negative item is randomly sampled from the item set.
    \item \textbf{BPR-DS}~\cite{rendle2014improving}: This method enhances BPRMF by adopting the dynamic sampling strategy, where negative items with higher prediction scores are more likely to be sampled.
    \item \textbf{ENMF}~\cite{chen2020efficient}: This is a MF-based model that uses all the unobserved interactions as negative samples without negative sampling. An efficient learning algorithm that minimizes the MSE loss is introduced to learn from the whole data.
    \item \textbf{RecVAE}~\cite{shenbin2020recvae}: This method is based on the variational autoencoder that reconstructs partially-observed user vectors, which introduces several techniques to improve M-VAE~\cite{liang2018variational}.
    \item \textbf{LGCN}~\cite{he2020lightgcn}: This is a simplified graph convolution network for CF that performs linear propagation between neighbors on the user-item bipartite graph.
    \item \textbf{DGCF}~\cite{wang2020disentangled}: This is a state-of-the-art GNN-based method that introduces disentanglement on top of LGCN, which models the intent-aware interaction graphs and encourages independence of different intents.
    \item \textbf{BUIR}~\cite{lee2021bootstrapping}: This is a state-of-the-art negative-sample-free CF method that learns user and item embeddings with only positive interactions.
    \item \textbf{CLRec}~\cite{zhou2021contrastive}: This is a recently proposed method based on contrastive learning, which adopts the InfoNCE loss to address the exposure bias in recommender systems.
\end{itemize}

\begin{table}
\normalsize
\tabcolsep=5.5pt
  \centering
  \caption{Statistics of datasets.}
  \label{tab:dataset}
  \begin{tabular}{lccccc}
    \toprule
    Dataset & \tabincell{c}{\#user\\($|\mathcal{U}|$)} & \tabincell{c}{\#item\\($|\mathcal{I}|$)} & \tabincell{c}{\#inter.\\($|\mathcal{R}|$)} & \tabincell{c}{avg. inter.\\per user} & density\\
    \midrule
    Beauty & 22.4k & 12.1k & 198.5k & 8.9 & 0.07\%\\
    Gowalla & 29.9k & 41.0k & 1027.4k & 34.4 & 0.08\%\\
    Yelp2018 & 31.7k & 38.0k & 1561.4k & 49.3 & 0.13\%\\
  \bottomrule
\end{tabular}
\vspace{-0mm}
\end{table}

\subsubsection{\textbf{Evaluation Protocols}}
Following the common practice~\cite{wang2019neural, he2020lightgcn, he2017neural}, for each dataset, we randomly split each user's interactions into training/validation/test sets with the ratio of 80\%/10\%/10\%.
To evaluate the performance of top-\textit{K} recommendation, we employ Recall and Normalized Discounted Cumulative Gain (NDCG) as evaluation metrics.
Recall@\textit{K} measures how many target items are retrieved in the recommendation result, while NDCG@\textit{K} further concerns about their positions in the ranking list.
Note that we consider the ranking list of all items (except for the training items in the user history) instead of ranking a smaller set of random items together with the target items, as suggested by recent work~\cite{DBLP:conf/kdd/KricheneR20}.
We repeat each experiment 5 times with different random seeds and report the average score.

\subsubsection{\textbf{Implementation Details}}
We use the RecBole~\cite{recbole} framework to implement all the methods for fair comparisons.
Adam is used as the default optimizer and the maximum number of epochs is set to 300.
Early stop is adopted if NDCG@20 on the validation dataset continues to drop for 10 epochs.
We set the embedding size to $64$ and the learning rate to 1e$^{-3}$ for all the methods.
The training batch size is set to 256 on Beauty and 1024 on the other two datasets.
The weight decay is tuned among [0, 1e$^{-8}$, 1e$^{-6}$, 1e$^{-4}$].
The default encoder $f$ in DirectAU is a simple embedding table that maps user/item IDs to embeddings.
The weight $\gamma$ of $l_{\rm uniform}$ in DirectAU is tuned within [0.2, 0.5, 1, 2, 5, 10].
As for baseline-specific hyper-parameters, we tune them in the ranges suggested by the original paper.
All the parameters are initialized by xavier initialization.
Codes are publicly available\footnote{https://github.com/THUwangcy/DirectAU}.

\begin{table*}[t!]
\tabcolsep=6.5pt
\centering
\caption{Top-\textit{K} recommendation performance on three datasets. The best results are in bold face, and the best baselines are underlined. The superscripts $^{**}$ indicate $p\leq0.01$ for the paired t-test of DirectAU vs. the best baseline (the relative improvements are denoted as Improv.).}
\begin{tabular}{C{1cm}L{1.6cm}cccccccclc}
    \toprule
    \multicolumn{2}{c}{Setting} & \multicolumn{8}{c}{Baseline Methods} & \multicolumn{2}{c}{Ours}\cr 
    \cmidrule(lr){1-2} \cmidrule(lr){3-10} \cmidrule(lr){11-12}
    Dataset & \multicolumn{1}{c}{Metric} & BPRMF & BPR-DS & ENMF & RecVAE & LGCN & DGCF & BUIR & CLRec & DirectAU & Improv. \cr
    \midrule
    \multirow{6}{*}{\rotatebox{90}{Beauty}} 
    & Recall@10 & 0.0806 & 0.0816 & 0.0915 & 0.0824 & 0.0863 & 0.0897 & 0.0816 & \underline{0.0937} & \textbf{0.1002}$^{**}$ & 6.94\% \cr
    & Recall@20 & 0.1153 & 0.1181 & 0.1282 & 0.1145 & 0.1201 & 0.1283 & 0.1204 & \underline{0.1337} & \textbf{0.1400}$^{**}$ & 4.74\% \cr
    & Recall@50 & 0.1763 & 0.1745 & 0.1914 & 0.1712 & 0.1819 & 0.1958 & 0.1866 & \underline{0.1996} & \textbf{0.2062}$^{**}$ & 3.33\% \cr
    \cmidrule(lr){2-2}\cmidrule(lr){3-10}\cmidrule(lr){11-12}
    & NDCG@10 & 0.0444 & 0.0459 & 0.0511 & 0.0486 & 0.0484 & 0.0501 & 0.0457 & \underline{0.0547} & \textbf{0.0582}$^{**}$ & 6.44\% \cr
    & NDCG@20 & 0.0534 & 0.0554 & 0.0606 & 0.0570 & 0.0581 & 0.0600 & 0.0556 & \underline{0.0651} & \textbf{0.0686}$^{**}$ & 5.38\% \cr
    & NDCG@50 & 0.0658 & 0.0670 & 0.0736 & 0.0686 & 0.0699 & 0.0738 & 0.0692 & \underline{0.0786} & \textbf{0.0820}$^{**}$ & 4.33\% \cr
    \midrule
    \multirow{6}{*}{\rotatebox{90}{Gowalla}} 
    & Recall@10 & 0.0866 & 0.1132 & 0.1149 & 0.1211 & 0.1289 & \underline{0.1301} & 0.0798 & 0.1215 & \textbf{0.1394}$^{**}$ & 7.15\% \cr
    & Recall@20 & 0.1263 & 0.1637 & 0.1671 & 0.1771 & 0.1871 & \underline{0.1889} & 0.1164 & 0.1755 & \textbf{0.2014}$^{**}$ & 6.63\% \cr
    & Recall@50 & 0.2040 & 0.2593 & 0.2675 & 0.2768 & \underline{0.2934} & 0.2919 & 0.1917 & 0.2813 & \textbf{0.3127}$^{**}$ & 6.56\% \cr
    \cmidrule(lr){2-2}\cmidrule(lr){3-10}\cmidrule(lr){11-12}
    & NDCG@10 & 0.0622 & 0.0814 & 0.0797 & 0.0845 & 0.0930 & \underline{0.0939} & 0.0570 & 0.0868 & \textbf{0.0991}$^{**}$ & 5.56\% \cr
    & NDCG@20 & 0.0736 & 0.0961 & 0.0953 & 0.1007 & 0.1097 & \underline{0.1099} & 0.0676 & 0.1022 & \textbf{0.1170}$^{**}$ & 6.44\% \cr
    & NDCG@50 & 0.0926 & 0.1196 & 0.1200 & 0.1251 & 0.1356 & \underline{0.1358} & 0.0858 & 0.1281 & \textbf{0.1442}$^{**}$ & 6.20\% \cr
    \midrule
    \multirow{6}{*}{\rotatebox{90}{Yelp2018}} 
    & Recall@10 & 0.0416 & 0.0533 & \underline{0.0596} & 0.0495 & 0.0508 & 0.0519 & 0.0444 & 0.0547 & \textbf{0.0684}$^{**}$ & 14.83\% \cr
    & Recall@20 & 0.0693 & 0.0864 & \underline{0.0957} & 0.0820 & 0.0833 & 0.0849 & 0.0737 & 0.0890 & \textbf{0.1096}$^{**}$ & 14.55\% \cr
    & Recall@50 & 0.1293 & 0.1572 & \underline{0.1710} & 0.1494 & 0.1534 & 0.1575 & 0.1386 & 0.1606 & \textbf{0.1935}$^{**}$ & 13.16\% \cr
    \cmidrule(lr){2-2}\cmidrule(lr){3-10}\cmidrule(lr){11-12}
    & NDCG@10 & 0.0335 & 0.0423 & \underline{0.0482} & 0.0395 & 0.0406 & 0.0409 & 0.0349 & 0.0436 & \textbf{0.0553}$^{**}$ & 14.77\% \cr
    & NDCG@20 & 0.0428 & 0.0534 & \underline{0.0603} & 0.0504 & 0.0514 & 0.0521 & 0.0448 & 0.0551 & \textbf{0.0691}$^{**}$ & 14.53\% \cr
    & NDCG@50 & 0.0602 & 0.0740 & \underline{0.0821} & 0.0698 & 0.0717 & 0.0732 & 0.0636 & 0.0758 & \textbf{0.0933}$^{**}$ & 13.67\% \cr
    \bottomrule
\end{tabular}
\label{tab:exp}
\end{table*}

\subsection{Overall Performance}
Table~\ref{tab:exp} shows the performance of different baseline CF methods and our DirectAU.
From the experimental results, we mainly have the following observations.

Firstly, it is surprising that directly optimizing alignment and uniformity yields such impressive performance improvements, given that most baselines come from studies in recent two years.
This demonstrates that these two properties strongly agree with the representation quality in CF, and current models might not address both alignment and uniformity well, which leads to inferior results.
Compared to state-of-the-art CF methods, DirectAU is not only conceptually simple but also empirically effective.

Secondly, we find the best baseline varies in different datasets.
The contrastive learning based CLRec is effective on Beauty; while the GNN-based DGCF takes advantage on Gowalla; and ENMF achieves remarkable performance on the largest dataset Yelp2018.
This shows that the characteristics of different CF models may suit different application scenarios.
On the contrary, DirectAU is capable of directly adjusting the balance between alignment and uniformity, leading to consistently the best performance on all three datasets.

Thirdly, comparing different kinds of baselines, methods with more complex encoders do not always benefit the performance.
The most complex model DGCF is only the most effective on Gowalla but generally costs much more time for training.
Differently, methods focusing on the learning objective (e.g., ENMF, CLRec) are more robust and usually yields promising results.
This shows the importance of designing suitable loss functions rather than sophisticated encoders.
The effectiveness of DirectAU also suggests that it is useful to understand the desired properties of representations in CF, which benefit the design of more powerful loss functions.

\begin{figure}[t!]
\centering
\includegraphics[trim={0 0.cm 0 0}, clip, width=1\columnwidth]{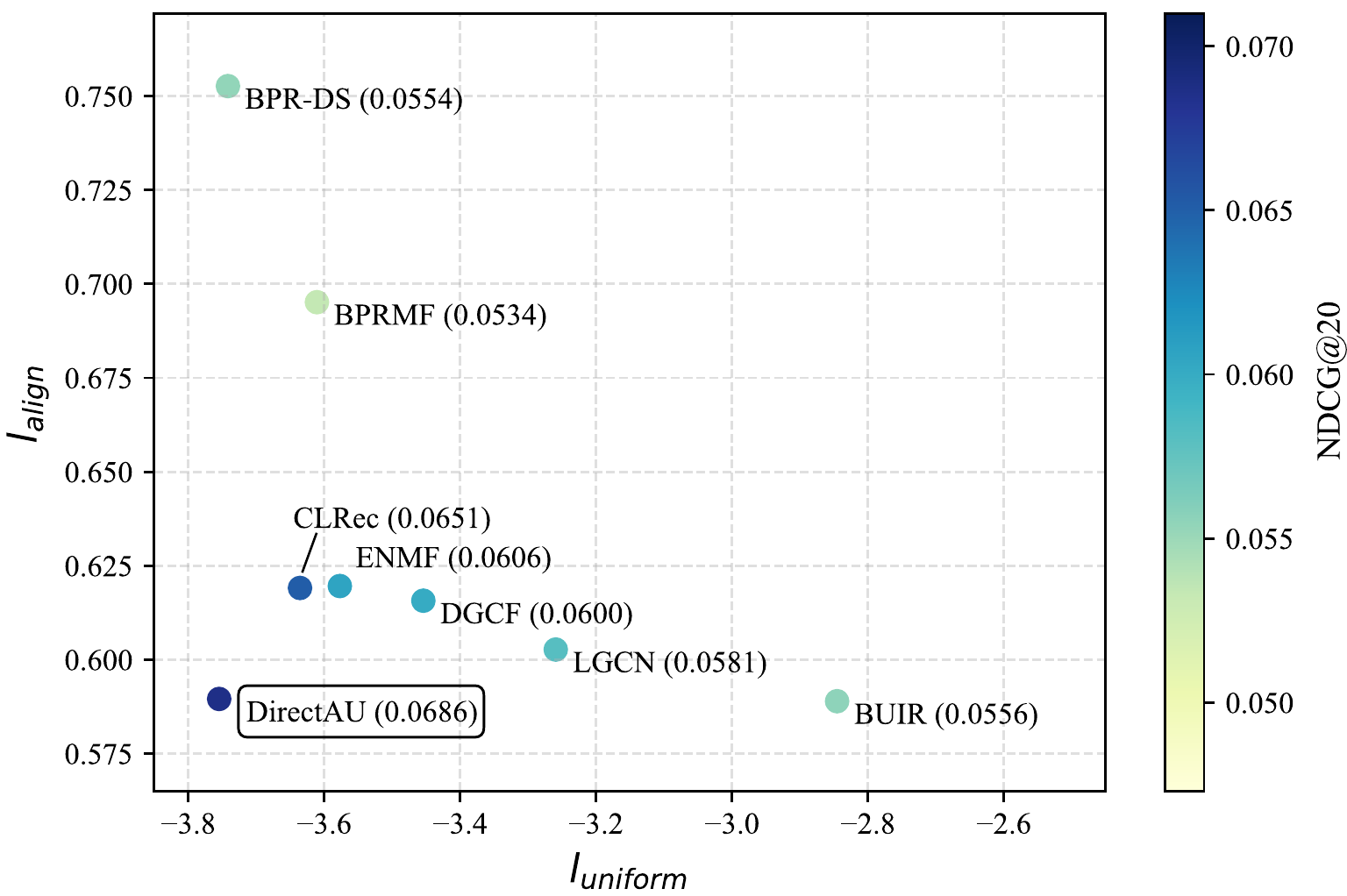}
\caption{$l_{\rm align}$-$l_{\rm uniform}$ plot of different CF models on Beauty. For both $l_{\rm align}$ and $l_{\rm uniform}$, lower numbers are better. Colors and numbers in parentheses indicate NDCG@20.}
\label{fig:overall}
\end{figure}

\begin{figure*}[ht]
\centering
\includegraphics[trim={0 0.2cm 0 0}, clip, width=0.99\textwidth]{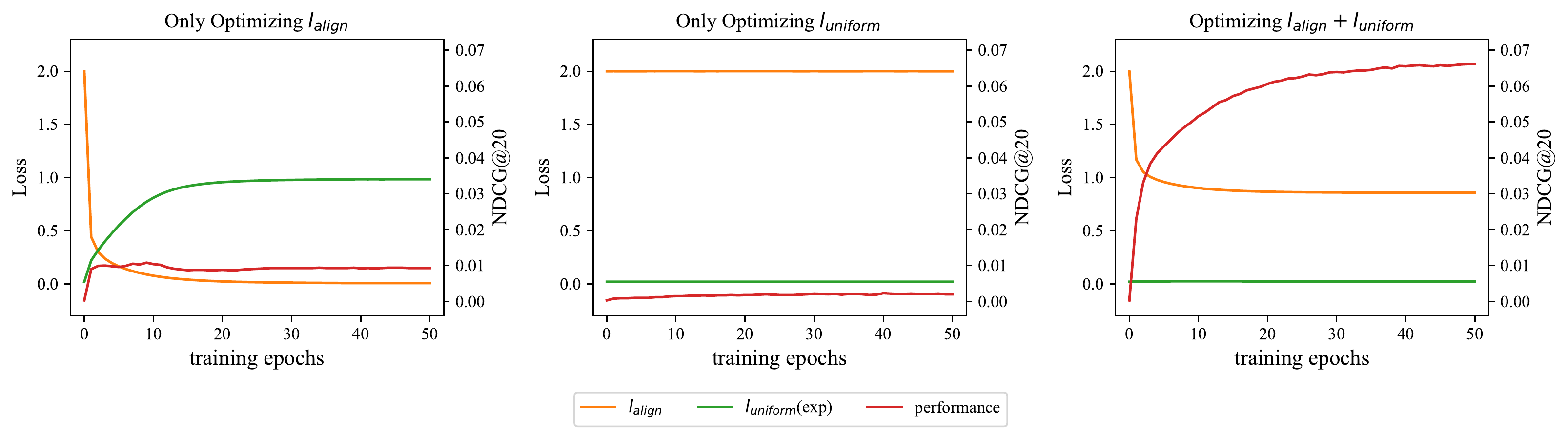}
\caption{Learning curves when only optimizing the alignment loss (left), only optimizing the uniformity loss (middle), and optimizing both of the losses (right) on Yelp2018. $l_{\rm uniform}$ is exponentiated for better visualization. The encoder yields poor performance when only one of alignment and uniformity is optimized. Both of the properties are important to learn high-quality user and item representations.}
\label{fig:ablation}
\end{figure*}

Furthermore, in Figure~\ref{fig:overall}, we show the alignment and uniformity of different CF methods\footnote{RecVAE is not included because it is a generative method without item embeddings. The alignment and uniformity metrics are invalid under our definition.} along with their recommendation performance on Beauty.
Overall, we can see methods with both better alignment and uniformity achieve better performance.
ENMF and CLRec become two strong baselines because of the balance between these two properties.
DGCF mainly improves uniformity on top of LGCN by introducing disentanglement to the representations.
The recent novel method BUIR achieves promising results without negative samples mainly due to the superiority in alignment.
But without the supervision signals from negative samples, the uniformity of BUIR is poor.
Compared to state-of-the-art CF methods, DirectAU achieves the lowest alignment and uniformity losses and yields the best performance.
This verifies the causal effect of alignment and uniformity on the representation quality in CF.

\begin{table}[t]
  \normalsize
  \tabcolsep=4pt
  \centering
  \caption{Performance comparison of different encoders when integrated with the proposed DirectAU loss.}
  \label{tab:integration}
  \begin{tabular}{lcccc}
    \toprule
    \multirow{2}{*}{Method} & \multicolumn{2}{c}{Beauty} & \multicolumn{2}{c}{Gowalla} \\
    \cmidrule(lr){2-3}\cmidrule(lr){4-5}
    & Recall@20 & NDCG@20 & Recall@20 & NDCG@20 \\
    \midrule
    BPRMF & 0.1153 & 0.0534 & 0.1263 & 0.0736 \\
    +DirectAU & \textbf{0.1400}$^{**}$ & \textbf{0.0686}$^{**}$ & \textbf{0.2014}$^{**}$ & \textbf{0.1170}$^{**}$ \\
    \midrule
    LGCN-1 & 0.1211 & 0.0560 & 0.1769 & 0.1033 \\
    +DirectAU & \textbf{0.1444}$^{**}$ & \textbf{0.0700}$^{**}$ & \textbf{0.2036}$^{**}$ & \textbf{0.1184}$^{**}$ \\
    \midrule
    LGCN-2 & 0.1201 & 0.0581 & 0.1871 & 0.1097 \\
    +DirectAU & \textbf{0.1455}$^{**}$ & \textbf{0.0707}$^{**}$ & \textbf{0.2043}$^{**}$ & \textbf{0.1191}$^{**}$ \\
    \bottomrule
  \end{tabular}
\end{table}

\subsection{Importance of Both Alignment and Uniformity Losses}
To show that both properties are important to learn informative encoders, Figure~\ref{fig:ablation} gives the learning curves when 1) only optimizing the alignment loss, 2) only optimizing the uniformity loss, and 3) optimizing both of the losses on Yelp2018.
If only alignment is considered (left), the encoder achieves perfect alignment ($l_{\rm align}$ approaches 0) but suffers a degeneration in uniformity.
As a result, the recommendation performance only improves a little at the beginning and then converges to poor results.
If only uniformity is considered (middle), the encoder maintains uniformity (randomly initialized embeddings are well uniform) but does not improve alignment.
Hence, the performance is even worse than only optimizing $l_{\rm align}$.
Differently, when optimizing both alignment and uniformity (right), the encoder keeps promising uniformity and continuously improves alignment at the same time.
As a result, the representation quality steadily increases and boosts the recommendation performance.
These trends demonstrate the importance of addressing both alignment and uniformity in CF.

\begin{table}[t]
  \normalsize
  \tabcolsep=7.2pt
  \centering
  \caption{Efficiency comparison on Yelp2018, including the average training time per epoch, the number of epochs to converge, and the total training time (s: second, m: minute, h: hour).}
  \label{tab:efficiency}
  \begin{tabular}{lcccccc}
    \toprule
    Method & time/epoch & \#epoch & total time \\
    \midrule
    BPRMF & 29.8s & 59 & 29m \\
    ENMF & 24.8s & 89 & 36m \\
    LGCN & 228.6s & 107 & 6h48m \\
    \midrule
    DirectAU & 37.3s & 50 & 31m \\
    \bottomrule
  \end{tabular}
\end{table}

\subsection{Integration with Other CF Encoders}
In the main experiments (Table~\ref{tab:exp}), we optimize the DirectAU loss with a simple MF encoder (i.e., embedding table).
This raises the question that whether it is also beneficial to directly optimize alignment and uniformity for other CF encoders.
Here we take MF and LGCN with different numbers of layers as the interaction encoder, respectively.
Table~\ref{tab:integration} shows the performance of these methods with their original losses and corresponding variants with the DirectAU loss.
LGCN-\textit{X} means the LGCN encoder with \textit{X} GNN layers.
We can see DirectAU consistently brings remarkable improvements to each encoder.
Besides, when integrated with more powerful encoders like LGCN-2, DirectAU achieves higher performance than the default MF encoder.
This shows the generalization ability of the proposed learning framework.
Meanwhile, we find the relative improvements are the most significant for the simplest MF encoder.
In the Gowalla dataset, it is impressive that MF+DirectAU leads to 59.2\% improvements than the original MF on average; while LGCN-2+DirectAU only brings around 8.9\% improvements.
This verifies the importance of choosing proper learning objectives in CF.
With the help of the DirectAU loss, a simple MF encoder can also learn high-quality representations, and hence achieves comparable results with the complex LGCN-2 encoder.
Considering the balance of effectiveness and efficiency, we still choose MF as the default encoder in the following analyses.

\subsection{Efficiency Analyses}
Here we compare the training efficiency of DirectAU with BPRMF and other two state-of-the-art CF models, i.e., ENMF and LGCN, which are both relatively efficient in their respective categories.
In Table~\ref{tab:efficiency}, we present the average training time per epoch, the number of epochs to converge, and the total training time on the largest dataset Yelp2018.
The efficiency experiments are conducted on the same machine (Intel Core 12-core CPU of 3.5GHz and single NVIDIA GeForce GTX 1080 Ti GPU).
We compare different methods under the same implementation framework and the setting of batch size is fixed to 256 to ensure fairness.
The results show that ENMF is the most efficient in terms of the training time per epoch, which results from the specifically designed learning algorithm.
The graph-based LGCN is much slower because of the neighborhood aggregation in each iteration, even if LGCN performs linear propagation for simplicity.
Our DirectAU needs a little more training time per epoch than BPRMF and ENMF mainly due to the calculation of the uniformity loss.
However, DirectAU generally converges fast and the total time is similar with BPRMF and ENMF, which is much faster than LGCN.
Thus, DirectAU is relatively efficient for its simplicity, and we believe the performance gains justify the runtime costs in practice.

\subsection{Parameter Sensitivity}
DirectAU introduces a hyper-parameter $\gamma$ that controls the weight of the uniformity loss.
It is worth noting that this is the only hyper-parameter to tune for DirectAU, which does not rely on negative sampling like previous CF methods.
Therefore, there is no need to consider the number of negative samples and the sampling strategy.
This makes DirectAU easy to use in real-world applications.
Figure~\ref{fig:param} shows how the performance changes when varying this hyper-parameter on the three datasets.
We can observe a similar trend that the performance increases first and then decreases.
Different datasets suit different degrees of uniformity, which depend on the characteristics of datasets.
We find higher uniformity weights might be preferable for datasets with more average interactions per user (i.e., Gowalla, Yelp2018), in which case representations might be more likely to be pushed closer due to the alignment loss.
Note that the range of $\gamma$ is not restricted from 0.2 to 10, which may need wider ranges and fine-grained steps in practice.
\section{Related Work}
\subsection{Collaborative Filtering}
Collaborative filtering (CF) plays an essential role in recommender systems~\cite{schafer2007collaborative}.
The core idea of CF is that similar users tend to have similar preferences.
Different from content-based filtering methods, CF does not rely on user and item profiles to make recommendations, and hence is flexible to work in various domains.
One of the primary methods for CF is the latent factor model, which learns latent user and item representations from observed interactions.
The predicted score of an unobserved user-item pair is derived by the similarity (e.g., dot product) between the user and item representation.
Traditional methods are mainly based on matrix factorization (MF)~\cite{koren2008factorization, koren2009matrix}.
With the development of neural networks, neural CF models begin to emerge to learn more powerful user/item representations~\cite{he2017neural, zhang2019deep}.
Besides, graph neural networks attract increasing attention recently, and a number of graph-based CF models have been proposed~\cite{he2020lightgcn, wang2019neural, wang2020disentangled, wu2019session}.
The observed user-item interactions are taken as a bipartite graph, and graph neural networks help to capture high-order connection information.

\begin{figure}[t]
\centering
\includegraphics[trim={0 0.8cm 0 0}, clip, width=1.0\columnwidth]{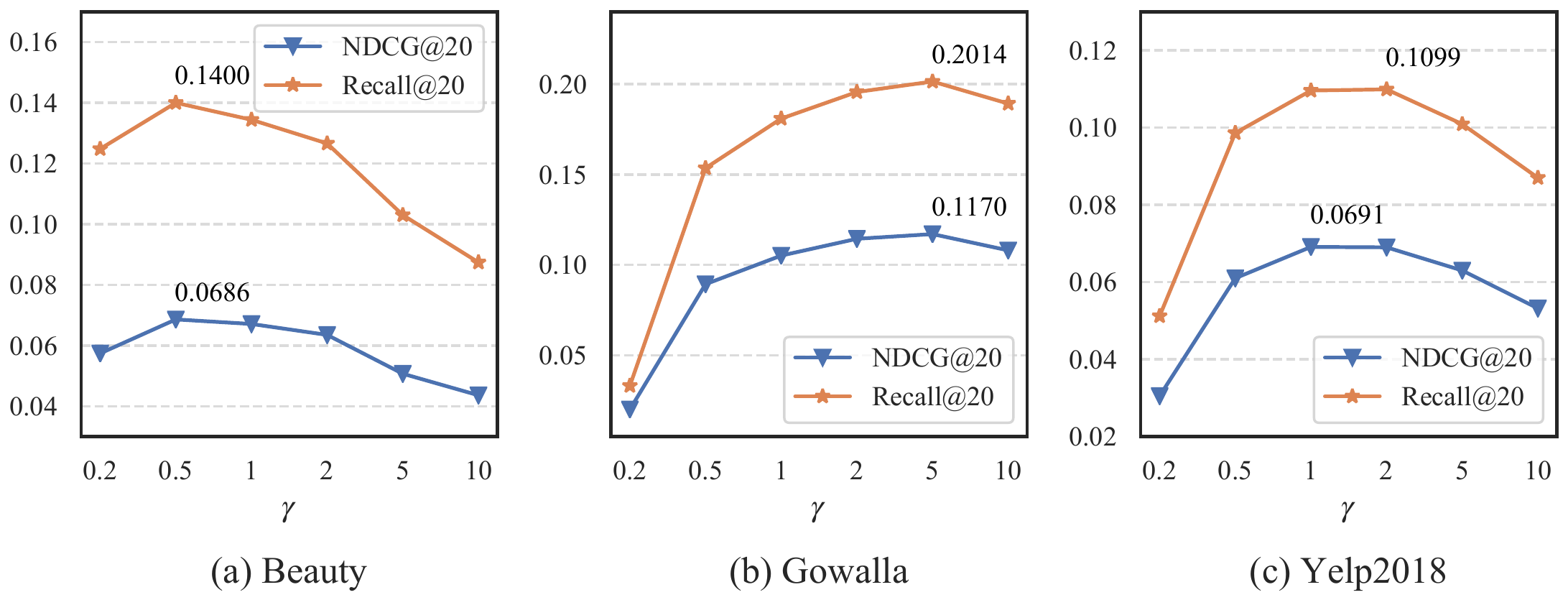}
\caption{Parameter sensitivity with regard to the weight of $l_{\rm uniform}$ in DirectAU.}
\label{fig:param}
\end{figure}

Existing studies in CF mainly focus on the model structure of the encoder but pay less attention to other components like the learning objective and the negative sampling strategy, which also contribute to the final performance.
Some recent works~\cite{chen2020efficient, lee2021bootstrapping, liu2021contrastive, mao2021simplex} begin to investigate alternative learning paradigms.
For example, ENMF~\cite{chen2020efficient} devises an efficient approach to optimize the MSE loss based on the whole data.
BUIR~\cite{lee2021bootstrapping} presents a novel asymmetric structure to learn from positive-only data.
CLRec~\cite{liu2021contrastive} adopts the InfoNCE loss in contrastive learning to address the exposure bias in recommender systems.
In this paper, we also focus on the learning objective in CF.
Differently, we are the first to investigate the desired properties of representations in CF from the perspective of alignment and uniformity.
And a new loss function that directly optimizes these two properties is proposed based on the analyses results.

\subsection{Alignment and Uniformity in Contrastive Representation Learning}
Unsupervised contrastive representation learning has witnessed great success in recent years~\cite{grill2020bootstrap}.
Studies in this literature usually aim to learn informative representations on the unit hypersphere based on self-supervised tasks.
Recent work~\cite{wang2020understanding} identifies two key properties related to the quality of representations, namely alignment and uniformity.
Similar instances are expected to have similar representations (alignment), and the distribution of representations is preferred to preserve as much information as possible (uniformity).
Alignment is usually easy to be achieved (e.g., mapping all the inputs to the same representations), but it is hard to maintain uniformity at the same time.
Previous representation learning strategies can be seen to preserve uniformity in different ways, such as discriminating from negative samples~\cite{gao2021simcse} and feature decorrelation~\cite{zbontar2021barlow}.
Directly matching uniformly sampled points on the unit hypersphere is also shown to provide good representations~\cite{bojanowski2017unsupervised}.
However, to the best of our knowledge, there still lacks thorough investigations towards alignment and uniformity in CF.
This work theoretically shows the connection between the typical BPR loss and these two properties.
Besides, our analyses towards learning dynamics of different CF methods help understand the rationales of existing CF methods and design new learning objectives.

\section{Conclusion}
In this paper, we investigate the desired properties of representations in collaborative filtering (CF).
Specifically, we propose to measure the representation quality in CF from the perspective of alignment and uniformity, inspired by recent progress in contrastive representation learning.
We first theoretically reveal the connection between the commonly adopted BPR loss and these two properties.
Then, we empirically analyze the learning dynamics of typical CF methods in terms of alignment and uniformity.
We find different methods may be good at different aspects, while either better alignment or better uniformity leads to higher recommendation performance.
Based on the analyses results, a loss function that directly optimizes these two properties is proposed and experimented to be effective.
A simple matrix factorization model with the proposed loss function achieves superior performance compared to state-of-the-art CF methods.
We hope this work could inspire the CF community to pay more attention to the learning paradigm via in-depth analyses towards the representation quality.

In the future, we will investigate other learning objectives that also favor alignment and uniformity to further improve effectiveness and efficiency.

\begin{acks}
This work is supported by the Natural Science Foundation of China (Grant No. U21B2026, 62002191) and Tsinghua University Guoqiang Research Institute.
We would like to thank the VMWare gift funding's partly support to the authors.
\end{acks}

\bibliographystyle{ACM-Reference-Format}
\bibliography{bibliography}


\begin{thebibliography}{34}


\ifx \showCODEN    \undefined \def \showCODEN     #1{\unskip}     \fi
\ifx \showDOI      \undefined \def \showDOI       #1{#1}\fi
\ifx \showISBNx    \undefined \def \showISBNx     #1{\unskip}     \fi
\ifx \showISBNxiii \undefined \def \showISBNxiii  #1{\unskip}     \fi
\ifx \showISSN     \undefined \def \showISSN      #1{\unskip}     \fi
\ifx \showLCCN     \undefined \def \showLCCN      #1{\unskip}     \fi
\ifx \shownote     \undefined \def \shownote      #1{#1}          \fi
\ifx \showarticletitle \undefined \def \showarticletitle #1{#1}   \fi
\ifx \showURL      \undefined \def \showURL       {\relax}        \fi
\providecommand\bibfield[2]{#2}
\providecommand\bibinfo[2]{#2}
\providecommand\natexlab[1]{#1}
\providecommand\showeprint[2][]{arXiv:#2}

\bibitem[\protect\citeauthoryear{Bojanowski and Joulin}{Bojanowski and
  Joulin}{2017}]%
        {bojanowski2017unsupervised}
\bibfield{author}{\bibinfo{person}{Piotr Bojanowski} {and}
  \bibinfo{person}{Armand Joulin}.} \bibinfo{year}{2017}\natexlab{}.
\newblock \showarticletitle{Unsupervised learning by predicting noise}. In
  \bibinfo{booktitle}{\emph{International Conference on Machine Learning}}.
  PMLR, \bibinfo{pages}{517--526}.
\newblock


\bibitem[\protect\citeauthoryear{Chen, Zhang, Zhang, Liu, and Ma}{Chen
  et~al\mbox{.}}{2020}]%
        {chen2020efficient}
\bibfield{author}{\bibinfo{person}{Chong Chen}, \bibinfo{person}{Min Zhang},
  \bibinfo{person}{Yongfeng Zhang}, \bibinfo{person}{Yiqun Liu}, {and}
  \bibinfo{person}{Shaoping Ma}.} \bibinfo{year}{2020}\natexlab{}.
\newblock \showarticletitle{Efficient Neural Matrix Factorization without
  Sampling for Recommendation}.
\newblock \bibinfo{journal}{\emph{ACM Transactions on Information Systems
  (TOIS)}} \bibinfo{volume}{38}, \bibinfo{number}{2} (\bibinfo{year}{2020}),
  \bibinfo{pages}{1--28}.
\newblock


\bibitem[\protect\citeauthoryear{Chen, Zhang, He, Nie, Liu, and Chua}{Chen
  et~al\mbox{.}}{2017}]%
        {chen2017attentive}
\bibfield{author}{\bibinfo{person}{Jingyuan Chen}, \bibinfo{person}{Hanwang
  Zhang}, \bibinfo{person}{Xiangnan He}, \bibinfo{person}{Liqiang Nie},
  \bibinfo{person}{Wei Liu}, {and} \bibinfo{person}{Tat-Seng Chua}.}
  \bibinfo{year}{2017}\natexlab{}.
\newblock \showarticletitle{Attentive collaborative filtering: Multimedia
  recommendation with item-and component-level attention}. In
  \bibinfo{booktitle}{\emph{Proceedings of the 40th International ACM SIGIR
  conference on Research and Development in Information Retrieval}}. ACM,
  \bibinfo{pages}{335--344}.
\newblock


\bibitem[\protect\citeauthoryear{Covington, Adams, and Sargin}{Covington
  et~al\mbox{.}}{2016}]%
        {covington2016deep}
\bibfield{author}{\bibinfo{person}{Paul Covington}, \bibinfo{person}{Jay
  Adams}, {and} \bibinfo{person}{Emre Sargin}.}
  \bibinfo{year}{2016}\natexlab{}.
\newblock \showarticletitle{Deep neural networks for youtube recommendations}.
  In \bibinfo{booktitle}{\emph{Proceedings of the 10th ACM conference on
  recommender systems}}. \bibinfo{pages}{191--198}.
\newblock


\bibitem[\protect\citeauthoryear{Gao, Yao, and Chen}{Gao et~al\mbox{.}}{2021}]%
        {gao2021simcse}
\bibfield{author}{\bibinfo{person}{Tianyu Gao}, \bibinfo{person}{Xingcheng
  Yao}, {and} \bibinfo{person}{Danqi Chen}.} \bibinfo{year}{2021}\natexlab{}.
\newblock \showarticletitle{SimCSE: Simple Contrastive Learning of Sentence
  Embeddings}.
\newblock \bibinfo{journal}{\emph{arXiv preprint arXiv:2104.08821}}
  (\bibinfo{year}{2021}).
\newblock


\bibitem[\protect\citeauthoryear{Grill, Strub, Altch{\'e}, Tallec, Richemond,
  Buchatskaya, Doersch, Pires, Guo, Azar, et~al\mbox{.}}{Grill
  et~al\mbox{.}}{2020}]%
        {grill2020bootstrap}
\bibfield{author}{\bibinfo{person}{Jean-Bastien Grill},
  \bibinfo{person}{Florian Strub}, \bibinfo{person}{Florent Altch{\'e}},
  \bibinfo{person}{Corentin Tallec}, \bibinfo{person}{Pierre~H Richemond},
  \bibinfo{person}{Elena Buchatskaya}, \bibinfo{person}{Carl Doersch},
  \bibinfo{person}{Bernardo~Avila Pires}, \bibinfo{person}{Zhaohan~Daniel Guo},
  \bibinfo{person}{Mohammad~Gheshlaghi Azar}, {et~al\mbox{.}}}
  \bibinfo{year}{2020}\natexlab{}.
\newblock \showarticletitle{Bootstrap your own latent: A new approach to
  self-supervised learning}.
\newblock \bibinfo{journal}{\emph{arXiv preprint arXiv:2006.07733}}
  (\bibinfo{year}{2020}).
\newblock


\bibitem[\protect\citeauthoryear{He, Deng, Wang, Li, Zhang, and Wang}{He
  et~al\mbox{.}}{2020}]%
        {he2020lightgcn}
\bibfield{author}{\bibinfo{person}{Xiangnan He}, \bibinfo{person}{Kuan Deng},
  \bibinfo{person}{Xiang Wang}, \bibinfo{person}{Yan Li},
  \bibinfo{person}{Yongdong Zhang}, {and} \bibinfo{person}{Meng Wang}.}
  \bibinfo{year}{2020}\natexlab{}.
\newblock \showarticletitle{Lightgcn: Simplifying and powering graph
  convolution network for recommendation}. In
  \bibinfo{booktitle}{\emph{Proceedings of the 43rd International ACM SIGIR
  conference on research and development in Information Retrieval}}.
  \bibinfo{pages}{639--648}.
\newblock


\bibitem[\protect\citeauthoryear{He, Liao, Zhang, Nie, Hu, and Chua}{He
  et~al\mbox{.}}{2017}]%
        {he2017neural}
\bibfield{author}{\bibinfo{person}{Xiangnan He}, \bibinfo{person}{Lizi Liao},
  \bibinfo{person}{Hanwang Zhang}, \bibinfo{person}{Liqiang Nie},
  \bibinfo{person}{Xia Hu}, {and} \bibinfo{person}{Tat-Seng Chua}.}
  \bibinfo{year}{2017}\natexlab{}.
\newblock \showarticletitle{Neural collaborative filtering}. In
  \bibinfo{booktitle}{\emph{Proceedings of the 26th International Conference on
  World Wide Web}}. International World Wide Web Conferences Steering
  Committee, \bibinfo{pages}{173--182}.
\newblock


\bibitem[\protect\citeauthoryear{Koren}{Koren}{2008}]%
        {koren2008factorization}
\bibfield{author}{\bibinfo{person}{Yehuda Koren}.}
  \bibinfo{year}{2008}\natexlab{}.
\newblock \showarticletitle{Factorization meets the neighborhood: a
  multifaceted collaborative filtering model}. In
  \bibinfo{booktitle}{\emph{Proceedings of the 14th ACM SIGKDD international
  conference on Knowledge discovery and data mining}}.
  \bibinfo{pages}{426--434}.
\newblock


\bibitem[\protect\citeauthoryear{Koren, Bell, and Volinsky}{Koren
  et~al\mbox{.}}{2009}]%
        {koren2009matrix}
\bibfield{author}{\bibinfo{person}{Yehuda Koren}, \bibinfo{person}{Robert
  Bell}, {and} \bibinfo{person}{Chris Volinsky}.}
  \bibinfo{year}{2009}\natexlab{}.
\newblock \showarticletitle{Matrix factorization techniques for recommender
  systems}.
\newblock \bibinfo{journal}{\emph{Computer}} \bibinfo{volume}{42},
  \bibinfo{number}{8} (\bibinfo{year}{2009}), \bibinfo{pages}{30--37}.
\newblock


\bibitem[\protect\citeauthoryear{Krichene and Rendle}{Krichene and
  Rendle}{2020}]%
        {DBLP:conf/kdd/KricheneR20}
\bibfield{author}{\bibinfo{person}{Walid Krichene} {and}
  \bibinfo{person}{Steffen Rendle}.} \bibinfo{year}{2020}\natexlab{}.
\newblock \showarticletitle{On sampled metrics for item recommendation}. In
  \bibinfo{booktitle}{\emph{Proceedings of the 26th ACM SIGKDD International
  Conference on Knowledge Discovery \& Data Mining}}.
  \bibinfo{pages}{1748--1757}.
\newblock


\bibitem[\protect\citeauthoryear{Lee, Kang, Ju, Park, and Yu}{Lee
  et~al\mbox{.}}{2021}]%
        {lee2021bootstrapping}
\bibfield{author}{\bibinfo{person}{Dongha Lee}, \bibinfo{person}{SeongKu Kang},
  \bibinfo{person}{Hyunjun Ju}, \bibinfo{person}{Chanyoung Park}, {and}
  \bibinfo{person}{Hwanjo Yu}.} \bibinfo{year}{2021}\natexlab{}.
\newblock \showarticletitle{Bootstrapping User and Item Representations for
  One-Class Collaborative Filtering}.
\newblock \bibinfo{journal}{\emph{arXiv preprint arXiv:2105.06323}}
  (\bibinfo{year}{2021}).
\newblock


\bibitem[\protect\citeauthoryear{Liang, Charlin, McInerney, and Blei}{Liang
  et~al\mbox{.}}{2016}]%
        {liang2016modeling}
\bibfield{author}{\bibinfo{person}{Dawen Liang}, \bibinfo{person}{Laurent
  Charlin}, \bibinfo{person}{James McInerney}, {and} \bibinfo{person}{David~M
  Blei}.} \bibinfo{year}{2016}\natexlab{}.
\newblock \showarticletitle{Modeling user exposure in recommendation}. In
  \bibinfo{booktitle}{\emph{Proceedings of the 25th international conference on
  World Wide Web}}. \bibinfo{pages}{951--961}.
\newblock


\bibitem[\protect\citeauthoryear{Liang, Krishnan, Hoffman, and Jebara}{Liang
  et~al\mbox{.}}{2018}]%
        {liang2018variational}
\bibfield{author}{\bibinfo{person}{Dawen Liang}, \bibinfo{person}{Rahul~G
  Krishnan}, \bibinfo{person}{Matthew~D Hoffman}, {and} \bibinfo{person}{Tony
  Jebara}.} \bibinfo{year}{2018}\natexlab{}.
\newblock \showarticletitle{Variational autoencoders for collaborative
  filtering}. In \bibinfo{booktitle}{\emph{Proceedings of the 2018 world wide
  web conference}}. \bibinfo{pages}{689--698}.
\newblock


\bibitem[\protect\citeauthoryear{Liu, Ma, Ouyang, and Xiong}{Liu
  et~al\mbox{.}}{2021}]%
        {liu2021contrastive}
\bibfield{author}{\bibinfo{person}{Zhuang Liu}, \bibinfo{person}{Yunpu Ma},
  \bibinfo{person}{Yuanxin Ouyang}, {and} \bibinfo{person}{Zhang Xiong}.}
  \bibinfo{year}{2021}\natexlab{}.
\newblock \showarticletitle{Contrastive Learning for Recommender System}.
\newblock \bibinfo{journal}{\emph{arXiv preprint arXiv:2101.01317}}
  (\bibinfo{year}{2021}).
\newblock


\bibitem[\protect\citeauthoryear{Mao, Zhu, Wang, Dai, Dong, Xiao, and He}{Mao
  et~al\mbox{.}}{2021}]%
        {mao2021simplex}
\bibfield{author}{\bibinfo{person}{Kelong Mao}, \bibinfo{person}{Jieming Zhu},
  \bibinfo{person}{Jinpeng Wang}, \bibinfo{person}{Quanyu Dai},
  \bibinfo{person}{Zhenhua Dong}, \bibinfo{person}{Xi Xiao}, {and}
  \bibinfo{person}{Xiuqiang He}.} \bibinfo{year}{2021}\natexlab{}.
\newblock \showarticletitle{SimpleX: A Simple and Strong Baseline for
  Collaborative Filtering}. In \bibinfo{booktitle}{\emph{Proceedings of the
  30th ACM International Conference on Information \& Knowledge Management}}.
  \bibinfo{pages}{1243--1252}.
\newblock


\bibitem[\protect\citeauthoryear{McAuley, Targett, Shi, and Van
  Den~Hengel}{McAuley et~al\mbox{.}}{2015}]%
        {mcauley2015image}
\bibfield{author}{\bibinfo{person}{Julian McAuley},
  \bibinfo{person}{Christopher Targett}, \bibinfo{person}{Qinfeng Shi}, {and}
  \bibinfo{person}{Anton Van Den~Hengel}.} \bibinfo{year}{2015}\natexlab{}.
\newblock \showarticletitle{Image-based recommendations on styles and
  substitutes}. In \bibinfo{booktitle}{\emph{Proceedings of the 38th
  International ACM SIGIR Conference on Research and Development in Information
  Retrieval}}. \bibinfo{pages}{43--52}.
\newblock


\bibitem[\protect\citeauthoryear{Rendle and Freudenthaler}{Rendle and
  Freudenthaler}{2014}]%
        {rendle2014improving}
\bibfield{author}{\bibinfo{person}{Steffen Rendle} {and}
  \bibinfo{person}{Christoph Freudenthaler}.} \bibinfo{year}{2014}\natexlab{}.
\newblock \showarticletitle{Improving pairwise learning for item recommendation
  from implicit feedback}. In \bibinfo{booktitle}{\emph{Proceedings of the 7th
  ACM international conference on Web search and data mining}}.
  \bibinfo{pages}{273--282}.
\newblock


\bibitem[\protect\citeauthoryear{Rendle, Freudenthaler, Gantner, and
  Schmidt-Thieme}{Rendle et~al\mbox{.}}{2009}]%
        {rendle2009bpr}
\bibfield{author}{\bibinfo{person}{Steffen Rendle}, \bibinfo{person}{Christoph
  Freudenthaler}, \bibinfo{person}{Zeno Gantner}, {and} \bibinfo{person}{Lars
  Schmidt-Thieme}.} \bibinfo{year}{2009}\natexlab{}.
\newblock \showarticletitle{BPR: Bayesian personalized ranking from implicit
  feedback}. In \bibinfo{booktitle}{\emph{Proceedings of the 25th conference on
  uncertainty in artificial intelligence}}. AUAI Press,
  \bibinfo{pages}{452--461}.
\newblock


\bibitem[\protect\citeauthoryear{Schafer, Frankowski, Herlocker, and
  Sen}{Schafer et~al\mbox{.}}{2007}]%
        {schafer2007collaborative}
\bibfield{author}{\bibinfo{person}{J~Ben Schafer}, \bibinfo{person}{Dan
  Frankowski}, \bibinfo{person}{Jon Herlocker}, {and} \bibinfo{person}{Shilad
  Sen}.} \bibinfo{year}{2007}\natexlab{}.
\newblock \showarticletitle{Collaborative filtering recommender systems}.
\newblock In \bibinfo{booktitle}{\emph{The adaptive web}}.
  \bibinfo{publisher}{Springer}, \bibinfo{pages}{291--324}.
\newblock


\bibitem[\protect\citeauthoryear{Shenbin, Alekseev, Tutubalina, Malykh, and
  Nikolenko}{Shenbin et~al\mbox{.}}{2020}]%
        {shenbin2020recvae}
\bibfield{author}{\bibinfo{person}{Ilya Shenbin}, \bibinfo{person}{Anton
  Alekseev}, \bibinfo{person}{Elena Tutubalina}, \bibinfo{person}{Valentin
  Malykh}, {and} \bibinfo{person}{Sergey~I Nikolenko}.}
  \bibinfo{year}{2020}\natexlab{}.
\newblock \showarticletitle{Recvae: A new variational autoencoder for top-n
  recommendations with implicit feedback}. In
  \bibinfo{booktitle}{\emph{Proceedings of the 13th International Conference on
  Web Search and Data Mining}}. \bibinfo{pages}{528--536}.
\newblock


\bibitem[\protect\citeauthoryear{Su and Khoshgoftaar}{Su and
  Khoshgoftaar}{2009}]%
        {su2009survey}
\bibfield{author}{\bibinfo{person}{Xiaoyuan Su} {and} \bibinfo{person}{Taghi~M
  Khoshgoftaar}.} \bibinfo{year}{2009}\natexlab{}.
\newblock \showarticletitle{A survey of collaborative filtering techniques}.
\newblock \bibinfo{journal}{\emph{Advances in artificial intelligence}}
  \bibinfo{volume}{2009} (\bibinfo{year}{2009}).
\newblock


\bibitem[\protect\citeauthoryear{Wang, Ma, and Chen}{Wang
  et~al\mbox{.}}{2022}]%
        {wang2022sequential}
\bibfield{author}{\bibinfo{person}{Chenyang Wang}, \bibinfo{person}{Weizhi Ma},
  {and} \bibinfo{person}{Chong Chen}.} \bibinfo{year}{2022}\natexlab{}.
\newblock \showarticletitle{Sequential Recommendation with Multiple Contrast
  Signals}.
\newblock \bibinfo{journal}{\emph{ACM Transactions on Information Systems
  (TOIS)}} (\bibinfo{year}{2022}).
\newblock


\bibitem[\protect\citeauthoryear{Wang, Ma, Zhang, Chen, Liu, and Ma}{Wang
  et~al\mbox{.}}{2020b}]%
        {wang2020toward}
\bibfield{author}{\bibinfo{person}{Chenyang Wang}, \bibinfo{person}{Weizhi Ma},
  \bibinfo{person}{Min Zhang}, \bibinfo{person}{Chong Chen},
  \bibinfo{person}{Yiqun Liu}, {and} \bibinfo{person}{Shaoping Ma}.}
  \bibinfo{year}{2020}\natexlab{b}.
\newblock \showarticletitle{Toward Dynamic User Intention: Temporal
  Evolutionary Effects of Item Relations in Sequential Recommendation}.
\newblock \bibinfo{journal}{\emph{ACM Transactions on Information Systems
  (TOIS)}} \bibinfo{volume}{39}, \bibinfo{number}{2} (\bibinfo{year}{2020}),
  \bibinfo{pages}{1--33}.
\newblock


\bibitem[\protect\citeauthoryear{Wang, Ma, Zhang, Liu, and Ma}{Wang
  et~al\mbox{.}}{2020c}]%
        {wang2020make}
\bibfield{author}{\bibinfo{person}{Chenyang Wang}, \bibinfo{person}{Weizhi Ma},
  \bibinfo{person}{Min Zhang}, \bibinfo{person}{Yiqun Liu}, {and}
  \bibinfo{person}{Shaoping Ma}.} \bibinfo{year}{2020}\natexlab{c}.
\newblock \showarticletitle{Make It a Chrous: Knowledge- and Time-aware Item
  Modeling for Sequential Recommendation}. In
  \bibinfo{booktitle}{\emph{Proceedings of the 43th International ACM SIGIR
  conference}}. ACM.
\newblock


\bibitem[\protect\citeauthoryear{Wang, Zhang, Ma, Liu, and Ma}{Wang
  et~al\mbox{.}}{2019b}]%
        {wang2019modeling}
\bibfield{author}{\bibinfo{person}{Chenyang Wang}, \bibinfo{person}{Min Zhang},
  \bibinfo{person}{Weizhi Ma}, \bibinfo{person}{Yiqun Liu}, {and}
  \bibinfo{person}{Shaoping Ma}.} \bibinfo{year}{2019}\natexlab{b}.
\newblock \showarticletitle{Modeling Item-Specific Temporal Dynamics of Repeat
  Consumption for Recommender Systems}. In \bibinfo{booktitle}{\emph{The World
  Wide Web Conference}}. ACM, \bibinfo{pages}{1977--1987}.
\newblock


\bibitem[\protect\citeauthoryear{Wang and Isola}{Wang and Isola}{2020}]%
        {wang2020understanding}
\bibfield{author}{\bibinfo{person}{Tongzhou Wang} {and}
  \bibinfo{person}{Phillip Isola}.} \bibinfo{year}{2020}\natexlab{}.
\newblock \showarticletitle{Understanding contrastive representation learning
  through alignment and uniformity on the hypersphere}. In
  \bibinfo{booktitle}{\emph{International Conference on Machine Learning}}.
  PMLR, \bibinfo{pages}{9929--9939}.
\newblock


\bibitem[\protect\citeauthoryear{Wang, He, Wang, Feng, and Chua}{Wang
  et~al\mbox{.}}{2019a}]%
        {wang2019neural}
\bibfield{author}{\bibinfo{person}{Xiang Wang}, \bibinfo{person}{Xiangnan He},
  \bibinfo{person}{Meng Wang}, \bibinfo{person}{Fuli Feng}, {and}
  \bibinfo{person}{Tat-Seng Chua}.} \bibinfo{year}{2019}\natexlab{a}.
\newblock \showarticletitle{Neural graph collaborative filtering}. In
  \bibinfo{booktitle}{\emph{Proceedings of the 42nd international ACM SIGIR
  conference on Research and development in Information Retrieval}}.
  \bibinfo{pages}{165--174}.
\newblock


\bibitem[\protect\citeauthoryear{Wang, Jin, Zhang, He, Xu, and Chua}{Wang
  et~al\mbox{.}}{2020a}]%
        {wang2020disentangled}
\bibfield{author}{\bibinfo{person}{Xiang Wang}, \bibinfo{person}{Hongye Jin},
  \bibinfo{person}{An Zhang}, \bibinfo{person}{Xiangnan He},
  \bibinfo{person}{Tong Xu}, {and} \bibinfo{person}{Tat-Seng Chua}.}
  \bibinfo{year}{2020}\natexlab{a}.
\newblock \showarticletitle{Disentangled graph collaborative filtering}. In
  \bibinfo{booktitle}{\emph{Proceedings of the 43rd International ACM SIGIR
  Conference on Research and Development in Information Retrieval}}.
  \bibinfo{pages}{1001--1010}.
\newblock


\bibitem[\protect\citeauthoryear{Wu, Tang, Zhu, Wang, Xie, and Tan}{Wu
  et~al\mbox{.}}{2019}]%
        {wu2019session}
\bibfield{author}{\bibinfo{person}{Shu Wu}, \bibinfo{person}{Yuyuan Tang},
  \bibinfo{person}{Yanqiao Zhu}, \bibinfo{person}{Liang Wang},
  \bibinfo{person}{Xing Xie}, {and} \bibinfo{person}{Tieniu Tan}.}
  \bibinfo{year}{2019}\natexlab{}.
\newblock \showarticletitle{Session-based recommendation with graph neural
  networks}. In \bibinfo{booktitle}{\emph{Proceedings of the AAAI Conference on
  Artificial Intelligence}}, Vol.~\bibinfo{volume}{33}.
  \bibinfo{pages}{346--353}.
\newblock


\bibitem[\protect\citeauthoryear{Zbontar, Jing, Misra, LeCun, and Deny}{Zbontar
  et~al\mbox{.}}{2021}]%
        {zbontar2021barlow}
\bibfield{author}{\bibinfo{person}{Jure Zbontar}, \bibinfo{person}{Li Jing},
  \bibinfo{person}{Ishan Misra}, \bibinfo{person}{Yann LeCun}, {and}
  \bibinfo{person}{St{\'e}phane Deny}.} \bibinfo{year}{2021}\natexlab{}.
\newblock \showarticletitle{Barlow twins: Self-supervised learning via
  redundancy reduction}.
\newblock \bibinfo{journal}{\emph{arXiv preprint arXiv:2103.03230}}
  (\bibinfo{year}{2021}).
\newblock


\bibitem[\protect\citeauthoryear{Zhang, Yao, Sun, and Tay}{Zhang
  et~al\mbox{.}}{2019}]%
        {zhang2019deep}
\bibfield{author}{\bibinfo{person}{Shuai Zhang}, \bibinfo{person}{Lina Yao},
  \bibinfo{person}{Aixin Sun}, {and} \bibinfo{person}{Yi Tay}.}
  \bibinfo{year}{2019}\natexlab{}.
\newblock \showarticletitle{Deep learning based recommender system: A survey
  and new perspectives}.
\newblock \bibinfo{journal}{\emph{ACM Computing Surveys (CSUR)}}
  \bibinfo{volume}{52}, \bibinfo{number}{1} (\bibinfo{year}{2019}),
  \bibinfo{pages}{1--38}.
\newblock


\bibitem[\protect\citeauthoryear{Zhao, Mu, Hou, Lin, Li, Chen, Lu, Wang, Tian,
  Pan, Min, Feng, Fan, Chen, Wang, Ji, Li, Wang, and Wen}{Zhao
  et~al\mbox{.}}{2020}]%
        {recbole}
\bibfield{author}{\bibinfo{person}{Wayne~Xin Zhao}, \bibinfo{person}{Shanlei
  Mu}, \bibinfo{person}{Yupeng Hou}, \bibinfo{person}{Zihan Lin},
  \bibinfo{person}{Kaiyuan Li}, \bibinfo{person}{Yushuo Chen},
  \bibinfo{person}{Yujie Lu}, \bibinfo{person}{Hui Wang},
  \bibinfo{person}{Changxin Tian}, \bibinfo{person}{Xingyu Pan},
  \bibinfo{person}{Yingqian Min}, \bibinfo{person}{Zhichao Feng},
  \bibinfo{person}{Xinyan Fan}, \bibinfo{person}{Xu Chen},
  \bibinfo{person}{Pengfei Wang}, \bibinfo{person}{Wendi Ji},
  \bibinfo{person}{Yaliang Li}, \bibinfo{person}{Xiaoling Wang}, {and}
  \bibinfo{person}{Ji-Rong Wen}.} \bibinfo{year}{2020}\natexlab{}.
\newblock \showarticletitle{RecBole: Towards a Unified, Comprehensive and
  Efficient Framework for Recommendation Algorithms}.
\newblock \bibinfo{journal}{\emph{arXiv preprint arXiv:2011.01731}}
  (\bibinfo{year}{2020}).
\newblock


\bibitem[\protect\citeauthoryear{Zhou, Ma, Zhang, Zhou, and Yang}{Zhou
  et~al\mbox{.}}{2021}]%
        {zhou2021contrastive}
\bibfield{author}{\bibinfo{person}{Chang Zhou}, \bibinfo{person}{Jianxin Ma},
  \bibinfo{person}{Jianwei Zhang}, \bibinfo{person}{Jingren Zhou}, {and}
  \bibinfo{person}{Hongxia Yang}.} \bibinfo{year}{2021}\natexlab{}.
\newblock \showarticletitle{Contrastive learning for debiased candidate
  generation in large-scale recommender systems}. In
  \bibinfo{booktitle}{\emph{Proceedings of the 27th ACM SIGKDD Conference on
  Knowledge Discovery \& Data Mining}}. \bibinfo{pages}{3985--3995}.
\newblock


\end{thebibliography}

\appendix
\balance
\section{Appendix}
In the appendix, we first show the learning algorithm of the proposed DirectAU.
Then, we detail the calculation of the alignment and uniformity losses when measuring the entire learned embeddings in CF.

\subsection{Learning Algorithm of DirectAU}
Algorithm~\ref{alg} shows the learning algorithm of DirectAU.
PyTorch-style pseudocodes to calculate alignment and uniformity losses during training are also given to facilitate reproducibility.

\begin{algorithm}[h]
  \caption{Learning algorithm of DirectAU (PyTorch style)}  
  \label{alg}  
  \begin{algorithmic}[1]  
    \Require  
      user-item interactions data $\mathcal{R}$; 
      structure of encoder network $f$;
      weight of the uniformity loss $\gamma$;
      embedding dimension $d$.
    \Ensure  
      encoder parameters $\theta$  
    \State Randomly initialize all parameters.
    \For{each mini-batch with $n$ user-item pairs $(u,i)\in\mathcal{R}$}
        \State Get user and item embeddings $f(u), f(i)$
        \State $\mathbf{x}=f(u)~/~||f(u)||,~~~\mathbf{y}=f(i)~/~||f(i)||$
        \State $\mathcal{L}_{\rm DirectAU}=\textproc{Align}(\mathbf{x}, \mathbf{y}) + \gamma\cdot(\textproc{Uni}(\mathbf{x}) + \textproc{Uni}(\mathbf{y}))~/~2$
        \State Update the encoder $f$ by gradient descent
    \EndFor \\
    
    \Function{align}{\texttt{\footnotesize x, y}} \hfill \textcolor[RGB]{75,123,128}{\# alignment loss}
        \State \Return \texttt{\footnotesize (x - y).norm(dim=1).pow(2).mean()}
    \EndFunction 
    \Function{uni}{\texttt{\footnotesize x}} \hfill \textcolor[RGB]{75,123,128}{\# uniformity loss}
        \State \texttt{\footnotesize dist = torch.pdist(x, p=2).pow(2)}
        \State \Return \texttt{\footnotesize dist.mul(-2).exp().mean().log()}
    \EndFunction
  \end{algorithmic} 
\end{algorithm} 

\subsection{Alignment and Uniformity Calculation}
According to our definitions for alignment and uniformity in CF, i.e., Eq.(\ref{eq:def}), user-item pairs to calculate the alignment loss should sample from the distribution of positive interactions $p_{\rm pos}$, and user-user (item-item) pairs to calculate the uniformity loss should sample from the corresponding user/item distribution $p_{\rm user}/p_{\rm item}$.
Given the learned embeddings of all the users and items, the alignment loss can be directly calculated as follows:
\begin{equation}
    l_{\rm align} = \dfrac{1}{|\mathcal{R}|}\sum_{(u,i)\in\mathcal{R}}||\tilde{f(u)} - \tilde{f(i)}||^2,
\end{equation}
where $\mathcal{R}$ is the set of observed user-item interactions as mentioned in Section 2.1.
We only need to traverse all the $(u,i)$ pairs in $\mathcal{R}$, and the time complexity is $O(|\mathcal{R}|)$.

As for the calculation of uniformity, a naive and intuitive method is to sample $u,u'\in\mathcal{U}$ and $i,i'\in\mathcal{I}$.
However, this is not consistent with the definition that $u,u'\sim p_{\rm user}$ and $i,i'\sim p_{\rm item}$.
Notice that the calculation of the uniformity loss during training follows the actual $p_{\rm user}$ and $p_{\rm item}$ because the training batch is constructed based on positive interactions.
When measure the overall uniformity of the learned embeddings, we should also sample two interactions from $\mathcal{R}$ and retain the user/item side as the input pair, which ensures that $u,u'$ and $i,i'$ are sampled from corresponding distribution:
\begin{equation}
\begin{split}
    l_{\rm uniform} =&~\left(\log\dfrac{1}{|\mathcal{R}|\left(|\mathcal{R}| - 1\right)}\sum_{(u,i),(u',i')\in\mathcal{R}}e^{-2||\tilde{f(u)} - \tilde{f(u')}||^2}\right)/~2~+ \\
        &~\left(\log\dfrac{1}{|\mathcal{R}|\left(|\mathcal{R}| - 1\right)}\sum_{(u,i),(u',i')\in\mathcal{R}}e^{-2||\tilde{f(i)} - \tilde{f(i')}||^2}\right)/~2.
\end{split}
\label{eq:uni_naive}
\end{equation}
Meanwhile, this calculation method is time-consuming and contains many redundant computations.
We need to traverse the entire interaction set $\mathcal{R}$ twice and the time complexity is $O(|\mathcal{R}|^2)$, which is usually intractable in practice.
To solve this problem, we devise a method to calculate the uniformity loss by directly sampling from the user/item set together with a popularity-weighting strategy:
\begin{equation}
\begin{split}
    l_{\rm uniform} =&~\left(\log\sum_{u,u'\in\mathcal{U}}\dfrac{p(u)p(u')}{P_U}\cdot e^{-2||\tilde{f(u)} - \tilde{f(u')}||^2}\right)/~2~+ \\
        &~\left(\log\sum_{i,i'\in\mathcal{I}}\dfrac{p(i)p(i')}{P_I}\cdot e^{-2||\tilde{f(i)} - \tilde{f(i')}||^2}\right)/~2,
\end{split}
\label{eq:uni}
\end{equation}
where $p(\cdot)$ returns the number of related interactions in $\mathcal{R}$ (i.e., popularity). $P_U=\sum_{u\in\mathcal{U}}p(u)$ and $P_I=\sum_{i\in\mathcal{I}}p(i)$ is the normalization factor, respectively.
It is easy to show that Eq.(\ref{eq:uni_naive}) and Eq.(\ref{eq:uni}) are exactly equivalent, while the latter reduces the computational cost to a large extent because the scale of $\mathcal{U}/\mathcal{I}$ is usually much smaller than $\mathcal{R}$.
In this way, we can measure both alignment and uniformity of the learned embeddings efficiently.

This popularity-weighting strategy also explicitly suggests that the uniformity loss focuses more on the distances between popular users/items as expected.
Those popular users and items are more likely to be aligned very close, and it is reasonable to encourage them to scatter on the hypersphere.

\end{document}